\documentclass{article} 
\usepackage{iclr2021_conference,times}


\usepackage{amsmath,amsfonts,bm}









\def\eqref#1{equation~\ref{#1}}









\def\1{\bm{1}}










\DeclareMathAlphabet{\mathsfit}{\encodingdefault}{\sfdefault}{m}{sl}
\SetMathAlphabet{\mathsfit}{bold}{\encodingdefault}{\sfdefault}{bx}{n}











\DeclareMathOperator*{\argmin}{arg\,min}

\usepackage{hyperref}
\usepackage{url}
\PassOptionsToPackage{hyphens}{url}\usepackage{hyperref}
\usepackage{amssymb,soul}
\usepackage{graphicx}
\usepackage{amsthm}
\usepackage{wrapfig}
\usepackage{multirow}
\usepackage{booktabs}

\newtheorem{proposition}{Proposition}
\newtheorem{definition}{Definition}

\newcommand{\modify}[1]{\textcolor{black}{#1}}

\setlength{\textfloatsep}{10pt plus 1.0pt minus 2.0pt}
\abovecaptionskip=2pt 
\belowcaptionskip=0pt 

\title{Learning Safe Multi-Agent Control with \\ Decentralized Neural Barrier Certificates}


\author{Zengyi Qin$^1$, Kaiqing Zhang$^2$, Yuxiao Chen$^3$, Jingkai Chen$^1$ and Chuchu Fan$^1$ \\
$^1$Massachusetts Institute of Technology \\
$^2$University of Illinois Urbana-Champaign \\
$^3$California Institute of Technology \\

\texttt{\{qinzy, chuchu\}@mit.edu} \\
}

%

\iclrfinalcopy 
\begin{document}

\maketitle

\begin{abstract}

We study the multi-agent safe control problem where agents should avoid collisions to static obstacles and collisions with each other while reaching their goals. Our core idea is to learn the multi-agent control policy \emph{jointly} with  learning the control barrier functions as \emph{safety certificates}. We propose a new joint-learning framework that can be implemented in a \emph{decentralized} fashion, which can adapt to an arbitrarily large number of agents. Building upon this framework, we further improve the scalability by  incorporating neural network architectures  that are invariant to the quantity and permutation of neighboring agents. In addition, we propose a new spontaneous policy refinement method to further enforce the certificate condition during testing. 
We provide extensive experiments to demonstrate that our method significantly outperforms other leading multi-agent control approaches in terms of maintaining safety and completing original tasks.
Our approach also shows substantial generalization capability in that the control policy can be trained with 8 agents in one scenario, while being used on other scenarios with up to 1024 agents in complex multi-agent environments and dynamics. Videos and source code can be found on the website\footnote{\href{https://realm.mit.edu/blog/learning-safe-multi-agent-control-decentralized-neural-barrier-certificates}{https://realm.mit.edu/blog/learning-safe-multi-agent-control-decentralized-neural-barrier-certificates}}.

\end{abstract}

\section{Introduction} 

Machine learning (ML) has created  unprecedented opportunities for achieving full autonomy. However, learning-based methods in autonomous systems (AS) can and do fail due to the lack of formal guarantees and limited generalization capability, which poses significant challenges for developing safety-critical AS, especially large-scale  multi-agent AS, that are provably dependable. 

On the other side, safety certificates~(\cite{chang2019neural,jin2020neural,choi2020reinforcement}), which widely exist in control theory and formal methods, serve as proofs for the satisfaction of the  desired properties of a system, under certain  control policies. 
For example, once found, a Control Barrier Function (CBF) ensures that the closed-loop system always stays inside some safe set ~\citep{wieland2007constructive, ames2014control} with a CBF Quadratic Programming (QP) supervisory controller. However, it is extremely difficult to synthesize CBF by hand for complex dynamic systems, which stems a growing interest in learning-based CBF~\citep{saveriano2020learning, srinivasan2020synthesis, jin2020neural, boffi2020learning, taylor2020learning, robey2020learning}. However, all of these studies only concern single-agent systems. How to develop learning-based approaches for safe multi-agent control that are both provably dependable  and  scalable remains open. 



In multi-agent control, there is a constant dilemma: centralized control strategies can hardly scale to a large number of agents, while decentralized control without coordination often misses safety and performance guarantees. In this work, we propose a novel learning framework that \emph{jointly} designs multi-agent control policies and  safety certificate from data, which can be implemented in a \emph{decentralized} fashion and scalable to an arbitrary number of agents.  
Specifically, 
we first introduce the notion of decentralized CBF as safety certificates, then propose the framework of learning decentralized CBF, with generalization error guarantees. The decentralized CBF can be seen as a contract among agents, which allows agents to learn a mutual agreement with each other on how to avoid collisions. Once such a controller is achieved through the joint-learning framework, it can be applied on an arbitrarily number of agents and in scenarios that are different from the training scenarios, which resolves the fundamental scalability issue in multi-agent control. We also propose several effective techniques in Section~\ref{sec:scalable_learning} to make such a learning process even more scalable and practical, which are then validated extensively in Section~\ref{sec:experiments}.

Experimental results are indeed promising. We study both 2D and 3D safe multi-agent control  problems, each with several distinct environments and complex nonholonomic dynamics. 
Our joint-learning framework performs  exceptionally well: our control policies trained on scenarios with 8 agents can be used on up to 1024 agents while maintaining low collision rates, {which has notably pushed the boundary of learning-based safe multi-agent control.} Speaking of which, 1024 is not the limit of our approach but rather due to the limited computational capability of our {laptop used for the experiments}. We also compare our approach with both leading learning-based methods \citep{lowe2017multi, zhang2019mamps, liu2020pic} and traditional planning methods~\citep{ma2019searching, fan2020fastest}. Our approach outperforms all the other approaches in terms of both completing the tasks and maintaining safety.

\textbf{Contributions.} Our main contributions are three-fold: 1) We propose the first framework to jointly learning safe multi-agent control policies  and CBF certificates, in a decentralized fashion. 2) We present several techniques that make the  learning framework more effective and scalable for practical multi-agent systems,  including the use of quantity-permutation invariant neural network architectures in learning to handle the  permutation of neighbouring agents. 3) We demonstrate via extensive experiments that our method  significantly outperforms other leading methods, and has exceptional generalization capability to unseen scenarios and an arbitrary number of agents, even in quite  complex multi-agent environments such as ground robots and drones. The video that demonstrates the outstanding  performance of our method can be found in the supplementary material.    

\textbf{\textit{Related Work.}}
\textbf{Learning-Based Safe Control via CBF.} Barrier certificates \citep{prajna2007framework} and CBF \citep{wieland2007constructive} is a well-known effective tool for guaranteeing the safety of  nonlinear dynamic systems. 
However, the existing methods for constructing  CBFs 
either rely on specific problem structures \citep{chen2017obstacle} or do not scale well  \citep{mitchell2005time}. Recently, there has been an increasing interest in learning-based and data-driven safe control via CBFs, which 
primarily consist of two   categories: \emph{learning CBFs  from  data}  \citep{saveriano2020learning,srinivasan2020synthesis,jin2020neural,boffi2020learning}, and \emph{CBF-based approach  for controlling unknown systems}  \citep{wang2017safety,wang2018safe,cheng2019end,taylor2020learning}. Our work is more pertinent to the former and is complementary to the latter, which usually assumes that the CBF is provided. None of these learning-enabled approaches, however, has addressed the multi-agent setting.

\textbf{Multi-Agent Safety Certificates and Collision Avoidance.} 
Restricted to holonomic systems, guaranteeing safety in multi-agent systems has been approached by limiting the velocities of the
agents \citep{van2008reciprocal,alonso2013optimal}. Later, 
\cite{borrmann2015control}
\cite{wang2017safety} have proposed the framework of \emph{multi-agent CBF} to generate collision-free 
controllers, with either perfectly known  system dynamics \citep{borrmann2015control}, or with worst-case uncertainty bounds \citep{wang2017safety}.  
Recently, \cite{chen2020guaranteed} has proposed a \emph{decentralized}  controller synthesized approach under this CBF framework, which is scalable to an arbitrary number of agents. 
However, in~\cite{chen2020guaranteed} the CBF controller relies on online integration of the dynamics under the backup strategy, which can be computationally challenging for complex systems. 
{Due to space limit, we omit other non-learning multi-agent control methods but acknowledge their importance.}

\textbf{Safe Multi-Agent (Reinforcement) Learning (MARL).} Safety concerns have drawn increasing attention in MARL,  especially with the applications to safety-critical multi-agent systems  \citep{zhang2019mamps,qie2019joint,shalev2016safe}. Under the CBF framework,  \cite{cheng2020safe} considered the setting with \emph{unknown}  system dynamics, and proposed to design robust multi-agent CBFs based on the \emph{learned}  dynamics. This mirrors the second category mentioned above in single-agent learning-based safe control, which is perpendicular to our focus. RL approaches have also been applied for multi-agent collision avoidance \citep{chen2017socially,lowe2017multi,everett2018motion,pmlr-v80-zhang18n}. Nonetheless, no formal guarantees of safety were established in these works. One exception is \cite{zhang2019mamps}, which proposed a multi-agent model
predictive shielding algorithm that provably guarantees safety for any policy learned from MARL, which differs from our multi-agent CBF-based approach. More importantly, none of these MARL-based approaches scale to a massive number of, e.g., thousands of agents, as our approach does. The most scalable MARL platform, to the best of our knowledge, is \cite{zheng2017magent}, which may handle a comparable scale of agents as ours, but with \emph{discrete} state-action spaces. This is in contrast to our continuous-space models that can model practical control systems such as robots and drones.

\section{Preliminaries}

\subsection{Control Barrier Functions as Safety Certificates}

One common approach for (single-agent) safety certificate is via control barrier functions~\citep{ames2014control}, which can enforce the states of dynamic systems to stay in the safe set. Specifically, let $\mathcal{S}\subset \mathbb{R}^n$ be the state space, $\mathcal{S}_d\subset\mathcal{S}$ is the dangerous set, $\mathcal{S}_s=\mathcal{S}\backslash\mathcal{S}_d$ is the safe set, which contains the set of initial conditions $S_0 \subset S_s$.  Also define the space of control actions as $\mathcal{U} \subset \mathbb{R}^m$. For a dynamic system  $\dot{s}(t) = f(s(t), u(t))$, a control barrier function $h: \mathbb{R}^n \mapsto \mathbb{R}$ satisfies:
\small
\begin{equation}\label{eq:CBF}
  \left(\forall s \in {\mathcal{S}_0},h(s) \ge 0 \right) \bigwedge \left(\forall s \in {\mathcal{S}_d}, h(s) < 0 \right) \bigwedge \left(\forall~ s \in \left\{ {s\mid h(s) \ge 0} \right\}, \nabla_s h \cdot f(s, u) + \alpha \left( h \right) \ge 0 \right),
\end{equation}
\normalsize
where $\alpha(\cdot)$ is a class-$\mathcal{K}$ function, i.e., $\alpha(\cdot)$ is strictly increasing and satisfies $\alpha(0)=0$. For a control policy $\pi:\mathcal{S}\to\mathcal{U}$ and CBF $h$, it is proved in \cite{ames2014control} that if $s(0) \in \left\{ {s\mid h(s) \ge 0} \right\}$ and the three conditions in (\ref{eq:CBF}) are satisfied with $u=\pi(x)$, then $s(t) \in \left\{ {s\mid h(s) \ge 0} \right\}$ for $\forall t \in [0, \infty)$, which means the state would never enter the dangerous set $\mathcal{S}_d$ under $\pi$. 

\subsection{Safety of Multi-agent Dynamic Systems}

Consider a multi-agent system with $N$ agents, the joint state of which at time $t$ is denoted by  $s(t) = \{s_1(t), s_2(t), \cdots, s_N(t)\}$ where $s_i(t) \in \mathcal{S}_i \subset \mathbb{R}^n$ denotes the state of agent $i$ at time $t$. The dynamics of agent $i$ is $\dot{s}_i(t) = f_i(s_i(t), u_i(t))$ where $u_i(t) \in \mathcal{U}_i \subset \mathbb{R}^m$ is the control action of agent $i$. The overall state space and input space are denoted as {\scriptsize{$\mathcal{S}\doteq\mathop{\otimes}\limits_{i=1}^N \mathcal{S}_i$}}, {\scriptsize{$\mathcal{U}\doteq\mathop{\otimes}\limits_{i=1}^N \mathcal{U}_i$}}. For each agent $i$, we define $\mathcal{N}_i(t)$ as the set of its neighborhood agents at time $t$. Let $o_i(t) \in \mathbb{R}^{n\times |\mathcal{N}_i(t)|}$ be the local observation of agent $i$, which is the states of $|\mathcal{N}_i(t)|$ neighborhood agents. Notice that the dimension of $o_i(t)$ is not fixed and depends on the quantity of neighboring agents. We assume that the safety of agent $i$ is jointly determined by $s_i$ and $o_i$. Let $\mathcal{O}_i$ be the set of all possible observations and $\mathcal{X}_i := \mathcal{S}_i\times \mathcal{O}_i$ be the state-observation space that contains the safe set $\mathcal{X}_{i,s}$, dangerous set $\mathcal{X}_{i,d}$ and initial conditions $\mathcal{X}_{i,0} \subset \mathcal{X}_{i, s}$. Let $d:\mathcal{X}_i\to\mathbb{R} $ describe the minimum distance from agent $i$ to other agents that it observes, $d(s_i,o_i)<\kappa_s$ implies collision. Then $\mathcal{X}_{i,s} = \{(s_i, o_i)|d(s_i,o_i)\ge \kappa_s\}$ and $\mathcal{X}_{i,d} = \{(s_i, o_i)|d(s_i,o_i) < \kappa_s\}$. Let $\bar{d}_i:\mathcal{S}\to\mathbb{R}$ be the lifting of $d$ from $\mathcal{X}_i$ to $\mathcal{S}$, which is well-defined since there is a surjection from $\mathcal{S}$ to $\mathcal{X}_i$. 
Then define $\mathcal{S}_s\doteq \{s\in\mathcal{S}|\forall i=1,...,N,\bar{d}_i(s)\ge \kappa_s \}$. The safety of a multi-agent system can be formally defined as follows:

\begin{definition}[Safety of Multi-Agent Systems]\label{def:safety}
If the state-observation satisfies $d(s_i,o_i)\ge \kappa_s$ for agent $i$ and time $t$, then agent $i$ is safe at time $t$. If for $\forall i$, agent $i$ is safe at time $t$, then the multi-agent system is safe at time $t$, and $s\in\mathcal{S}_s$.
\end{definition}
A main objective of this paper is to learn the control policy $\pi_i(s_i(t), o_i(t))$ for $\forall i$ such that the multi-agent system is safe. The control policy is decentralized (i.e., each agent has its own control policy and there does not exist a central controller to  coordinate all the agents). In this way, our decentralized approach has the hope to scale to very a large number of agents.

\section{Learning Framework for Multi-Agent Decentralized CBF} \label{sec:theory}

\subsection{Decentralized Control Barrier Functions} \label{sec:dec_cbf}
 
For a multi-agent dynamic system, the most na\"ive CBF would be a centralized function taking into account the cross production of all agents' states, which leads to an exponential blow-up in the state space and difficulties in modeling systems with an arbitrary number of agents.
Instead, we consider a decentralized control barrier function $h_i: \mathcal{X}_i \mapsto \mathbb{R}$:
\small
\begin{equation}\label{eq:decCBF}
  \begin{aligned}
&\left(\forall~ (s_i, o_i) \in {\mathcal{X}_{i, 0}}, h_i(s_i, o_i) \ge 0\right) \bigwedge \left(\forall~ (s_i, o_i) \in {\mathcal{X}_{i,d}}, h_i(s_i, o_i) < 0 \right) \bigwedge \\
&\left(\forall~ (s_i, o_i) \in \left\{ {(s_i, o_i) \mid h_i(s_i, o_i) \ge 0} \right\}, \nabla_{s_i} h_i \cdot f_i(s_i, u_i)+\nabla_{o_i}h_i\cdot\dot{o_i}(t) + \alpha \left( h_i \right) \ge 0 \right)
\end{aligned}
\end{equation}
\normalsize
where $\dot{o_i}(t)$ is the time derivative of the observation, which depends on the behavior of other agents. Although there is no explicit expression of this term, it can be evaluated and incorporated in the learning process.
Note that the CBF $h_i(s_i, o_i)$ is local in the sense that it only depends on the local state $s_i$ and observation $o_i$.   We refer to the three conditions in (\ref{eq:decCBF}) as \emph{decentralized CBF conditions}. The following proposition shows that satisfying (\ref{eq:decCBF}) guarantees the safety of the  multi-agent system. 

\begin{proposition}[Multi-Agent Safety Certificates with Decentralized CBF] \label{prop:macbf}
If for $\forall i$,  the initial state-observation $(s_i(0), o_i(0)) \in \left\{ {(s_i, o_i) \mid h_i(s_i, o_i) \ge 0} \right\}$ and the decentralized CBF conditions in (\ref{eq:decCBF}) are satisfied, then  $\forall i$ and $\forall t$, $(s_i(t), o_i(t)) \in \left\{ {(s_i, o_i) \mid h_i(s_i, o_i) \ge 0} \right\}$, which implies the state would never enter $\mathcal{X}_{i, d}$ for any agent $i$. Thus, by Definition \ref{def:safety}, the multi-agent system is safe. 
\end{proposition}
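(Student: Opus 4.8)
The plan is to reduce the multi-agent safety claim to $N$ independent applications of the single-agent forward-invariance result already invoked from \cite{ames2014control}. The key observation is that the decentralized CBF conditions in (\ref{eq:decCBF}) are stated per agent, over the augmented state-observation pair $(s_i, o_i) \in \mathcal{X}_i$, and the superlevel set $\{(s_i,o_i)\mid h_i(s_i,o_i)\ge 0\}$ plays exactly the role that $\{s\mid h(s)\ge 0\}$ plays in the single-agent setting of (\ref{eq:CBF}). So the core of the argument is to show that, for each fixed $i$, the set $\{(s_i,o_i)\mid h_i(s_i,o_i)\ge 0\}$ is forward-invariant along the trajectory of the closed-loop system.

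First I would fix an arbitrary agent $i$ and treat the augmented variable $x_i(t) := (s_i(t), o_i(t))$ as the object whose dynamics we track. Its time derivative decomposes as $\dot x_i = (\dot s_i, \dot o_i)$, and the chain rule gives $\tfrac{d}{dt} h_i(s_i,o_i) = \nabla_{s_i} h_i \cdot f_i(s_i,u_i) + \nabla_{o_i} h_i \cdot \dot o_i(t)$, which is precisely the left-hand quantity appearing in the third conjunct of (\ref{eq:decCBF}). Thus the third decentralized CBF condition says exactly that $\tfrac{d}{dt} h_i \ge -\alpha(h_i)$ whenever $h_i(s_i,o_i)\ge 0$. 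With $\alpha$ a class-$\mathcal{K}$ function (so $\alpha(0)=0$), this is the standard differential inequality guaranteeing that once $h_i \ge 0$ it cannot cross zero downward: at the boundary $h_i = 0$ we have $\tfrac{d}{dt}h_i \ge -\alpha(0) = 0$. Combined with the hypothesis $(s_i(0),o_i(0)) \in \{h_i\ge 0\}$, a comparison-lemma / Nagumo-type argument (the same invariance result cited from \cite{ames2014control}) yields $(s_i(t),o_i(t)) \in \{h_i\ge 0\}$ for all $t\in[0,\infty)$.

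Next I would invoke the second conjunct of (\ref{eq:decCBF}): since $h_i(s_i,o_i) < 0$ on the dangerous set $\mathcal{X}_{i,d}$, staying in $\{h_i\ge 0\}$ means $(s_i(t),o_i(t)) \notin \mathcal{X}_{i,d}$ for all $t$, i.e.\ $d(s_i(t),o_i(t)) \ge \kappa_s$ for all $t$. Because $i$ was arbitrary, this holds simultaneously for every $i=1,\dots,N$, and Definition~\ref{def:safety} then upgrades the per-agent safety to system-level safety, $s(t)\in\mathcal{S}_s$, closing the argument.

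The step I expect to be the main obstacle is the rigorous handling of the observation derivative $\dot o_i(t)$ inside the invariance argument. Unlike the clean single-agent case, $\dot o_i$ is driven by the other agents' policies and is not a function of agent $i$'s own state and control alone, and the dimension of $o_i$ can even change when $\mathcal{N}_i(t)$ changes, so the augmented system $x_i(t)$ is time-varying and only piecewise smooth. The honest way to address this is to note that the third condition in (\ref{eq:decCBF}) is quantified over \emph{all} admissible $o_i$ and its realized derivative, so the differential inequality $\tfrac{d}{dt}h_i \ge -\alpha(h_i)$ holds along the actual trajectory regardless of what the neighbors do; the invariance conclusion then follows on each smooth interval, and one patches across the (measure-zero) switching times by continuity of $h_i$. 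I would flag that a fully formal treatment requires assuming $h_i$ is continuous and the trajectory is well-defined (existence/uniqueness of solutions), which the preliminaries implicitly grant.
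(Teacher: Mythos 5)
Your proposal matches the paper's own proof essentially step for step: the per-agent reduction, the chain-rule identity $\dot h_i = \nabla_{s_i} h_i \cdot f_i(s_i,u_i) + \nabla_{o_i} h_i \cdot \dot o_i(t)$, forward invariance of $\{(s_i,o_i) \mid h_i \ge 0\}$ via the single-agent result in Section~2 of \cite{ames2014control}, exclusion of $\mathcal{X}_{i,d}$ by the second condition, and the conclusion via Definition~\ref{def:safety}. The only divergence is in the technical side issue you flagged: where you patch across neighborhood-switching times by arguing invariance on smooth intervals plus continuity of $h_i$ (which is not automatic, since $h_i(s_i(t),o_i(t))$ can jump when $\mathcal{N}_i(t)$ changes), the paper's remark instead invokes nonsmooth CBF theory with generalized gradients \citep{glotfelter2017nonsmooth} to make sense of $\dot h_i$ at those times.
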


The proof of Proposition~\ref{prop:macbf} is provided in the supplementary material. The key insight of Proposition~\ref{prop:macbf} is that for the whole multi-agent system, the CBFs can be applied in a \emph{decentralized} fashion for each agent. 
\modify{
Since $h_i(s_i, o_i)\geq 0$ is invariant, by definition of $h_i$, $h_i(s_i, o_i)>0\implies \bar{d}_i(s)\ge \kappa_s$, which means agent $i$ never gets closer than $\kappa_s$ to all its neighborhood agents. Therefore, $\forall i, h_i(s_i, o_i)\ge0$ implies that $\forall i, \bar{d}_i(s)\ge \kappa_s$, which by definition also means $s\in\mathcal{S}_s$, and the multi-agent system is safe as defined in Definition~\ref{def:safety}.
}

Notice that an agent only needs to care about its local information, and if all agents respect the same form of contract (i.e., the decentralized CBF conditions), the whole multi-agent system will be safe.
The fact that global safety can be guaranteed by decentralized CBF is of great importance since it reveals that a centralized controller that coordinates all agents is not necessary to achieve safety. A centralized control policy has to deal with the dimension explosion when the number of agents grow, while a decentralized design can significantly improve the scalability to a large number of agents.

\subsection{Learning Framework}
\label{sec:framework_and_guarantee}

From Proposition~\ref{prop:macbf}, we know that if we can jointly learn the control policy $\pi_i(s_i, o_i)$ and control barrier function $h_i(s_i, o_i)$ such that the decentralized CBF conditions in (\ref{eq:decCBF}) are satisfied, then the multi-agent system is guaranteed to be safe. Next we formulate the optimization objective for the joint learning of $\pi_i(s_i, o_i)$ and $h_i(s_i, o_i)$. 
Let $T\subset \mathbb{R}_{+}$ be the time interval and $\tau_i = \{ s_i(t), o_i(t) \}_{t \in T}$ be a trajectory of state and observation of agent $i$. Let $\mathcal{T}_i$ be the set of all possible trajectories of agent $i$. Let $\mathcal{H}_i$ and $\mathcal{V}_i$ be the function classes of $h_i$ and $\pi_i$. Define the function $y_i: \mathcal{T}_i \times \mathcal{H}_i \times \mathcal{V}_i \mapsto \mathbb{R}$ as:
\begin{equation} \label{eqn:y_function}
    \begin{aligned}
    y_i(\tau_i, h_i, \pi_i) := \min \Big\{\inf_{ \mathcal{X}_{i, 0} \cap \tau_i} h_i(s_i, o_i),~ \inf_{\mathcal{X}_{i, d} \cap \tau_i} -h_i(s_i, o_i),~ \inf_{\mathcal{X}_{i, h} \cap \tau_i} (\dot h_i + \alpha \left( h_i \right)) \Big\}
    \end{aligned}.
\end{equation}
The set $\mathcal{X}_{i, h} := \left\{ {(s_i, o_i) \mid h_i(s_i, o_i) \ge 0} \right\}$.  Notice that the third item on the right side of Equation~(\ref{eqn:y_function}) depends on both the control policy and CBF, since $\dot h_i = \nabla_{s_i} h_i \cdot f_i(s_i, u_i)+\nabla_{o_i}h_i\cdot\dot{o_i}(t), u_i = \pi_i(s_i, o_i)$. \modify{It is clear that if we can find $h_i$ and $\pi_i(s_i, o_i)$ such that $y_i(\tau_i, h_i, \pi_i) > 0$ for $\forall \tau_i \in \mathcal{T}_i$ and $\forall i$, then the conditions in (\ref{eq:decCBF}) are satisfied.} For each agent $i$, assume that we are given $z_i$ i.i.d trajectories $\{\tau_i^{1}, \tau_i^{2}, \cdots, \tau_i^{z_i}\}$ drawn from distribution $\mathcal{D}_i$ during training. We solve the objective:
\begin{equation} \label{eqn:y_function_optimization}
    \begin{aligned}
    \mathrm{For~all~} i, \mathrm{~find~} h_i \in\mathcal{H}_i \mathrm{~and~} \pi_i \in\mathcal{V}_i, \mathrm{~~~~~s.t.~~~~~} y_i(\tau_i^j, h_i, \pi_i) \geq \gamma, ~\forall j = 1, 2, \cdots z_i 
    \end{aligned},
\end{equation}
where $\gamma>0$ is a margin for the satisfaction of the CBF condition in (\ref{eq:decCBF}). Following standard results from statistical learning theory, it is possible to establish the generalization guarantees of the solution to (\ref{eqn:y_function_optimization}) to unseen data drawn from $\mathcal{D}_i$. See a detailed statement of the results in Appendix B. The bound depends on the richness of the function classes, the margin $\gamma$, as well as the number of samples used in (\ref{eqn:y_function_optimization}). However, we note that such a generalization bound is only with respect to the \emph{open-loop} data drawn from $\mathcal{D}_i$, the training data distribution, not to the \emph{closed-loop} data in the testing when the learned controller is deployed. It is known to be challenging to handle the distribution shift between the training and testing due to the closed-loop effect. See, e.g., a recent result along this line in the context of imitation learning \citep{tu2021closing}. We leave a systematic treatment of this closed-loop generalization guarantee for learning CBF in our future work. Finally, we note that in our experiments, we solve (\ref{eqn:y_function_optimization}) and update the controller in an iterative fashion: run the closed-loop system with a certain controller to sample training data online and formulate (\ref{eqn:y_function_optimization}), then update the controller by the solution of (\ref{eqn:y_function_optimization}). We run the system using the updated controller and re-generate new samples to solve for a new controller. At the steady stage of this iterative process, the training and testing distribution shift becomes negligible, which validates the use of the generalization bounds given in Proposition \ref{prop:guarantees}. See more details of this implementation in Section~\ref{sec:joint_learning}, and more  discussion on this point in the {\textbf{Remark}} in Appendix \ref{app:b}.

Besides the generalization guarantee, there also exists several other gaps between the formulation in (\ref{eqn:y_function_optimization}) and the practical implementation. First, (\ref{eqn:y_function_optimization}) does not provide a concrete way of designing loss functions to realize the optimization objectives. Second, there are still $N$ pairs of functions $(h_i,~\pi_i)$ to be learned. Unfortunately, the dimension of the input $o_i$ of the functions $h_i,~\pi_i$ are different for each agent $i$, and will even \emph{change over time} in practice, as the proximity of other agents is time-varying,  leading to time-varying local observations. 
To scale to an arbitrary number of agents, $h_i$ and $\pi_i$ should be invariant to the quantity and permutation of neighbourhood agents. 
Third, (\ref{eqn:y_function_optimization}) does not provide ways to deal with scenarios where the decentralized CBF conditions are not (strictly) satisfied, i.e., where problem~(\ref{eqn:y_function_optimization}) is not even  feasible, which may very likely occur when the system becomes too complex or the function classes are not rich enough. To this end, we propose effective approaches to solving these issues, facilitating the scalable learning of safe multi-agent control in practice, as to be introduced next.

\section{Scalable Learning of Decentralized CBF in Practice}\label{sec:scalable_learning}


\begin{figure}
    \centering
    \includegraphics[width=\linewidth]{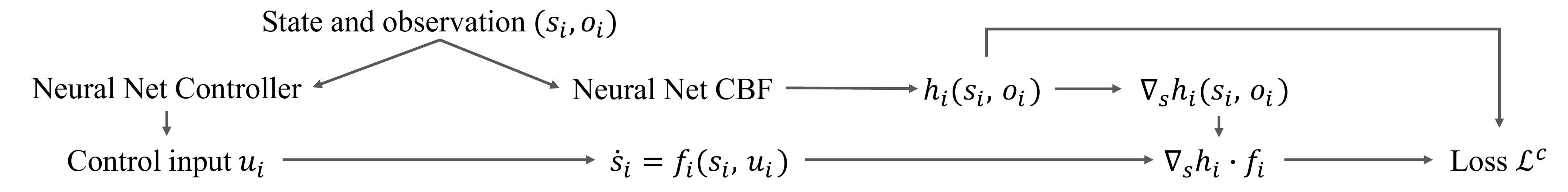}
    \caption{\footnotesize The computational graph of the control-certificate jointly learning framework in multi-agent systems. Only the graph for agent $i$ is shown because agents have the same graph and the computation is decentralized.}
    \label{fig:comp_graph}
\end{figure}

Following the theory in Section~\ref{sec:theory}, we consider the practical learning of safe multi-agent control with neural barrier certificates, i.e., using neural networks for $\mathcal{H}$ and $\mathcal{V}$. We will present the formulation of loss functions in Section~\ref{sec:joint_learning}, which corresponds to the objective in (\ref{eqn:y_function_optimization}). Section~\ref{sec:observation_encoder} presents the neural network architecture of $h_i$ and $\pi_i$, which are invariant to the quantity and permutation of neighboring agents. Section~\ref{sec:policy_refinement} demonstrates a spontaneous policy refinement method that enables the control policy to satisfy the decentralized CBF conditions as possible as it could during testing.

\subsection{Loss Functions of Jointly Learning Controllers and Barrier Certificates} \label{sec:joint_learning}
Based on Section~\ref{sec:framework_and_guarantee}. the main idea is to \emph{jointly} learn the control policies and control barrier functions in multi-agent systems. During training, the CBFs regulate the control policies to satisfy the decentralized CBF conditions (\ref{eq:decCBF}) so that the learned policies are safe. All agents are put into a single environment to generate experiences, which are combined to minimize the empirical loss function $\mathcal{L}^{c} = \Sigma_i \mathcal{L}_i^{c}$, where $\mathcal{L}_i^{c}$ is the loss function for agent $i$ formulated as:
\small
\begin{equation}  \tag{7} \label{eq:lossCBF}
\begin{aligned}
\mathcal{L}_i^{c}(\theta_i, \omega_i) &= \sum_{s_i \in \mathcal{X}_{i, 0}} \max\left(0, \gamma -h_i^{\theta_i}(s_i, o_i)\right) + \sum_{s_i \in \mathcal{X}_{i, d}} \max\left(0, \gamma + h_i^{\theta_i}(s_i, o_i)\right)  \\
&+ \sum_{s_i \in \mathcal{X}_{i, h}} \max\left(0, \gamma -\nabla_{s_i} h_i^{\theta_i} \cdot f_i\left(s_i, \pi_i^{\omega_i}(s_i, o_i)\right) - \nabla_{o_i}h_i^{\theta_i}\cdot\dot{o_i} - \alpha(h_i^{\theta_i})\right),
\end{aligned}
\end{equation}
\normalsize
where $\gamma$ is the margin defined in Section~\ref{sec:framework_and_guarantee}. \modify{We choose $\gamma = 10^{-2}$ in implementation.} $\theta_i$ and $\omega_i$ are neural network parameters. On the right side of Equation~(\ref{eq:lossCBF}), the three items enforce the three CBF conditions respectively. \modify{Directly computing the third term could be challenging since we need to evaluate $\dot{o}_i$, which is the time derivative of the observation. Instead, we approximate $\dot{h}(s_i, o_i) =  \nabla_{s_i} h_i^{\theta_i} \cdot f_i\left(s_i, \pi_i^{\omega_i}(s_i, o_i)\right) + \nabla_{o_i}h_i^{\theta_i}\cdot\dot{o_i}$ numerically by $\dot{h}(s_i, o_i) = [h(s_i(t+\Delta t), o_i(t+ \Delta t)) - h(s_i(t), o_i(t))] / \Delta t$.} For the class-$\mathcal{K}$ function $\alpha(\cdot)$, we simply choose a linear function $\alpha(h) = \lambda h$. Note that $\mathcal{L}^{c}$ mainly considers safety instead of goal reaching. \modify{To train a safe control policy $\pi_i(s_i, o_i)$ that can drive the agent to the goal state, we also minimize the distance between $u_i$ and $u_i^{g}$, where $u_i^{g}$ is the reference control input computed by classical approaches (e.g., LQR and PID controllers) to reach the goal. 
The goal reaching loss $\mathcal{L}^{g} = \Sigma_i \mathcal{L}_i^{g}$, where $\mathcal{L}_i^{g}$ is formulated as $\mathcal{L}_i^{g} (\omega_i) = \sum_{s_i \in \mathcal{X}} ||\pi_i^{\omega_i}(s_i, o_i) - u_i^g(s_i)||_2$.
The final loss function $\mathcal{L} = \mathcal{L}^c + \eta \mathcal{L}^g$, where $\eta$ is a balance weight that is set to $0.1$ in our experiments.}
We present the computational graph in Figure~\ref{fig:comp_graph} to help understand the information flow.

In training, all agents are put into the specific environment, which is not necessarily the same as the testing environment, to collect state-observation pairs $(s_i, o_i)$ under their current policies with probability $1-\iota$ and random policies with probability $\iota$, where $\iota$ is set to be $0.05$ in our experiment. The collected $(s_i, o_i)$ are stored as a temporary dataset and in every step of policy update, 128 $(s_i, o_i)$ are randomly sampled from the temporary dataset to calculate the total loss $\mathcal{L}$. We minimize $\mathcal{L}$ by applying stochastic gradient descent with learning rate $10^{-3}$ and weight decay $10^{-6}$ to $\theta_i$ and $\omega_i$, which are the parameters of the CBF and control policies. Note that the gradients are computed by back-propagation rather than policy gradients because $\mathcal{L}$ is differentiable w.r.t. $\theta_i$ and $\omega_i$.

\paragraph{Iterative Data Collection and Training.} It is important to note that we did not use a fixed set of state-observation pairs to train the decentralized CBF and controllers. Instead, we adopted an on-policy training strategy, where the training data are collected by running the current system. The collected state-observation pairs are stored in temporary dataset that is used to calculate the loss terms and update the decentralized CBF and controllers via gradient descent. Then the updated controllers are used to run the system and re-generate new state-observation pairs as training data. The iterative data collection and training is performed until the loss converges. Such a training process is crucial for generalizing to testing scenarios. More discussion on this point can be found in the {\textbf{Remark}} in Appendix \ref{app:b}.

\subsection{Quantity-Permutation Invariant Observation Encoder}  \label{sec:observation_encoder}
Recall that in Section~\ref{sec:dec_cbf}, we define $o_i$ as the local observation of agent $i$. $o_i$ contains the states of neighboring agents and its dimension can change dynamically. In order to scale to an arbitrary number of agents, there are two pivotal principles of designing the neural network architectures of $h_i(s_i, o_i)$ and $\pi_i(s_i, o_i)$. First, the architecture should be able to dynamically adapt to the changing quantity of observed agents that affects the dimension of $o_i$. Second, the architecture should be invariant to the permutation of observed agents, which should not affect the output of $h_i$ or $\pi_i$. All these challenges arise from encoding the local observation $o_i$. Inspired by PointNet~\citep{charles2017pointnet}, we leverage the \emph{max pooling} layer to build the quantity-permutation invariant observation encoder.

Let us start with a simple example with input observation $o_i(t) \in \mathbb{R}^{n\times |\mathcal{N}_i(t)|}$, where $n$ is the dimension of state and $\mathcal{N}_i(t)$ is the set of the neighboring agents at time $t$. $n$ is fixed while $\mathcal{N}_i(t)$ can change from time to time. The permutation of the columns of $o_i$ is also dynamic. Denote the weight matrix as $W \in \mathbb{R}^{p\times n}$ and the element-wise \modify{ReLU} activation function as $\sigma(\cdot)$. Define the row-wise max pooling operation as $\mathrm{RowMax(\cdot)}$, which takes a matrix as input and outputs the maximum value of each row. Consider the following mapping $\rho: \mathbb{R}^{n\times |\mathcal{N}_i(t)|} \mapsto \mathbb{R}^{p}$ formulated as
\begin{equation}  \tag{8} \label{eq:max_pool}
    \begin{aligned}
        \rho(o_i) = \mathrm{RowMax}(\sigma(Wo_i))
    \end{aligned},
\end{equation}
where $\rho$ maps a matrix $o_i$ whose column has dynamic dimension and permutation to a fixed length feature vector $\rho(o_i) \in \mathbb{R}^{p}$. The dimension of $\rho(o_i)$ remains the same even if the number of columns of $o_i(t)$, which is $|\mathcal{N}_i(t)|$, change over time. The network architecture of the control policy is shown in Figure~\ref{fig:network_policy}, which uses the $\mathrm{RowMax(\cdot)}$ operation. The network of the control barrier function is similar except that the output is a scalar instead of a vector.

\begin{figure}
    \centering
    \includegraphics[width=\linewidth]{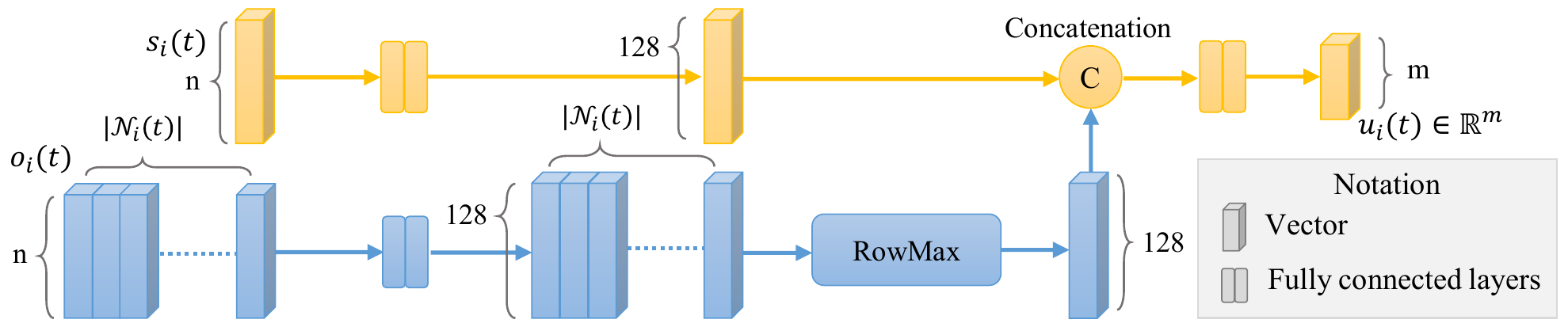}
    \caption{\footnotesize Neural network architecture of the control policy. The blue part indicates the quantity-permutation invariant observation encoder, which maps $o_i(t) \in \mathbb{R}^{n\times |\mathcal{N}_i(t)|}$ with time-varying dimension to a fixed length vector. The network takes the state $s_i$ and local observation $o_i$ as input to compute a control action $u_i$. The neural network of the decentralized CBF $h_i$ has a similar architecture except that the output is a scalar.}
    \label{fig:network_policy}
\end{figure}

\subsection{Spontaneous Online Policy Refinement} \label{sec:policy_refinement}
We propose a spontaneous online policy refinement approach that produces even safer control policies in testing than the neural network has actually learned during training. When the model dynamics or environment settings are too complex and exceed the capability of the control policy, the decentralized CBF conditions can be violated at some points along the trajectories. Thanks to the control barrier function jointly learned with the control policy, we are able to refine the control input $u_i$ online by minimizing the violation of the decentralized CBF conditions. \modify{That is, the learned CBF can serve as a guidance on generating updated $u_i$ in unseen scenarios to guarantee safety. This is also a standard technique used in (non-learning) CBF control where the CBF $h$ is usually computed first using optimization methods like Sum-of-Squares, then the control inputs $u$ are computed online using $h$ by solving quadratic programming problems~\citep{xu2017correctness,ames2017control}. In the experiments, we also study the effects of such an online policy refinement step.}

Given the state $s_i$, local observation $o_i$, and action $u_i$ computed by the control policy, consider the scenario where the third CBF condition is violated, which means $\nabla_{s_i} h_i \cdot f_i(s_i, u_i) + \nabla_{o_i} h_i \cdot \dot{o}_i + \alpha(h_i) < 0$ when $h_i \geq 0$. Let $e_i \in \mathbb{R}^m$ be an increment of the action $u_i$.  Define $\phi(e_i): \mathbb{R}^m \mapsto \mathbb{R}$ as
\begin{equation}  \tag{9} \label{eq:refine}
    \begin{aligned}
        \phi(e_i) = \max(0, -\nabla_{s_i} h_i \cdot f_i(s_i, u_i + e_i) - \nabla_{o_i} h_i \cdot \dot{o}_i - \alpha(h_i)) + \mu ||e_i||_2^2
    \end{aligned} .
\end{equation}
\modify{If the first term on the right side of Equation~(\ref{eq:refine}) is 0, then the third CBF condition is satisfied. We can enforce the satisfaction in every timestep of testing (after $u_i$ is computed by the neural network controller) by finding an $e_i$ that minimizes $\phi(e_i)$. $\mu$ is a regularization factor that punishes large $e_i$. We set $\mu=1$ in implementation and observed that in our experiment, a fixed $\mu$ is sufficient to make sure the $||u_i+e_i||$ do not exceed the constraint on control input bound. When evaluating on new scenarios and the constraints is violated, one can dynamically increase $\mu$ to strengthen the penalty.} For every timestep during testing, we initialize $e_i$ to zero and check the value of $\phi(e_i)$. $\phi(e_i) > 0$ indicates that the control policy is not good enough to satisfy the decentralized CBF conditions. Then we iteratively refine $e_i$ by $e_i = e_i - \nabla_e \phi(e_i)$ until $\phi(e_i) - \mu ||e_i||_2^2 = 0$ or the maximum allowed iteration is exceeded. The final control input is $u_i = u_i + e_i$. Such a refinement can flexibly refine the control input to satisfy the decentralized CBF conditions as much as possible.

\section{Experimental Results}\label{sec:experiments}

\begin{figure}
    \centering
    \includegraphics[width=\linewidth]{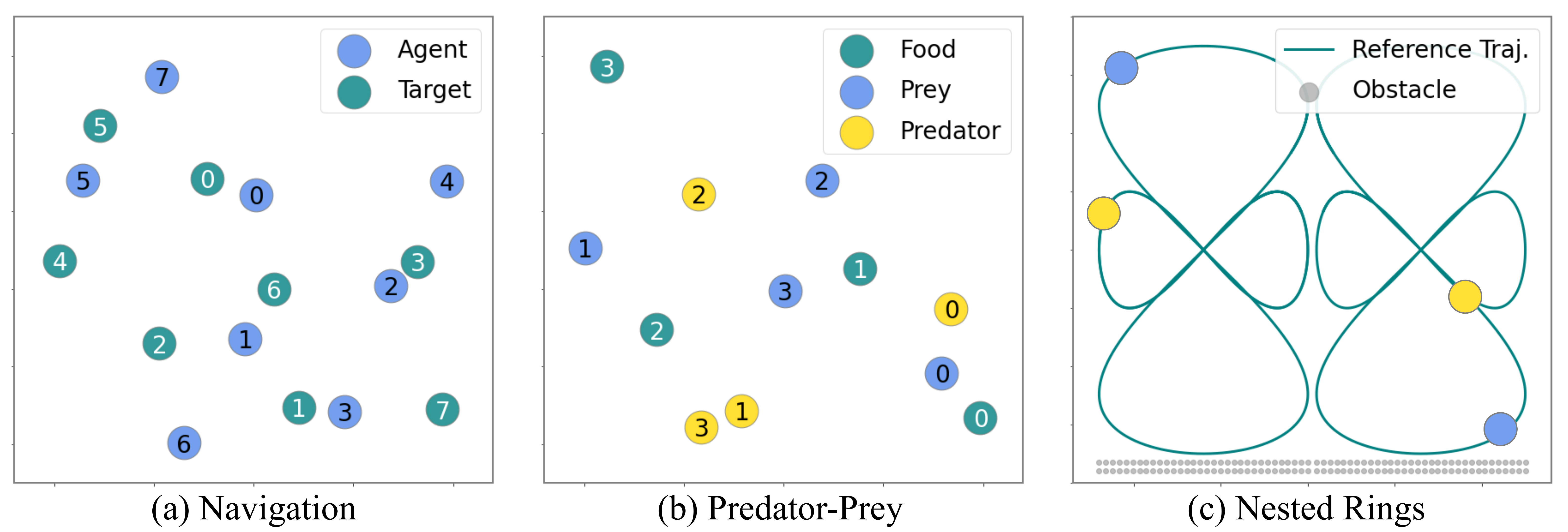}
    \caption{\footnotesize Illustrations of the 2D environments used in the experiments. The \emph{Navigation} and \emph{Predator-Prey} environments are adopted from the multi-agent particle environment~\citep{lowe2017multi}. The \emph{Nested-Rings} environment is adopted from \cite{erick2014trajectory}.}
    \label{fig:env_2d}
    \vspace{-0.2cm}
\end{figure}

\begin{figure}
    \centering
    \includegraphics[width=\linewidth]{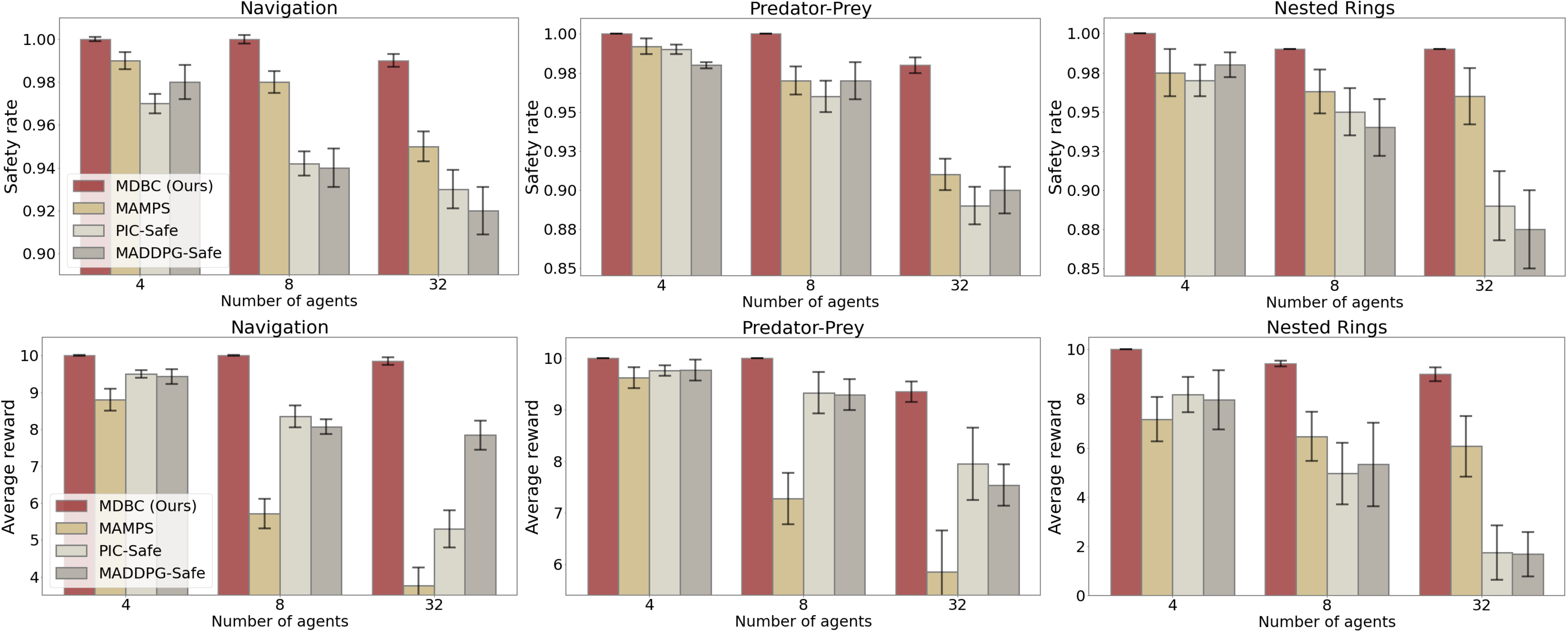}
    \caption{\footnotesize Safety rate and reward in the 2D tasks. Results are taken after each method converged and are averaged over 10 independent trials.}
    \label{fig:results_2d}
    \vspace{-0.1cm}
\end{figure}

\textbf{Baseline Approaches.} The baseline approaches we compare with include: MAMPS~\citep{zhang2019mamps}, PIC~\citep{liu2020pic} and MADDPG~\citep{lowe2017multi}. For the drone tasks, we also compare with model-based planning method S2M2~\citep{chen2021scalable}. A brief description of each method is as follows. MAMPS leverages the model dynamics to iteratively switch to safe control policies when the learned policies are unsafe. PIC proposes the permutation-invariant critic to enhance the performance of multi-agent RL. We incorporate the safety reward to its reward function and denote this safe version of PIC as PIC-Safe. The safety reward is -1 when the agent enters the dangerous set. MADDPG is a pioneering work on multi-agent RL, and MADDPG-Safe is obtained by adding the safety reward to the reward function that is similar to PIC-Safe. S2M2 is a state-of-the-art model-based multi-agent safe motion planner. \modify{When directly planning all agents fails, S2M2 evenly divides the agent group to smaller partitions for replanning until paths that are collision-free for each partition are found. The agents then follow the generated paths using PID or LQR controllers. }

\modify{For each task, the environment model is the same for all the methods. The exact model dynamics are visible to model-based methods including MAMPS, S2M2 and our methods, and invisible to the model-free MADDPG and PIC. Since the model-free methods do not have access to model dynamics but instead the simulators, they are more data-demanding. The number of state-observation pairs to train MADDPG and PIC is $10^3$ times more than that of model-based learning methods to make sure they converge to their best performance. When training the RL-based methods, the control action computed by LQR for goal-reaching is also fed to the agent as one of the inputs to the actor network. So the RL agents can learn to use LQR as a reference for goal-reaching.}

\textbf{Evaluation Criteria.} Since the primal focus of this paper is the safety of multi-agent systems, we use the \emph{safety rate} as a criteria when evaluating the methods. The safety rate is calculated as $\frac{1}{N} \Sigma_{i=1}^N \mathbb{E}_{t \in T} \left[\mathbb{I}((s_i(t), o_i(t)) \in \mathcal{X}_s) \right]$ where $\mathbb{I}(\cdot)$ is the indicator function that is 1 when its argument is true or 0 otherwise. \modify{The observation $o_i$ contains the states of other agents within the observation radius, which is 10 times the safe distance. The safe distance is set to be the diagonal length of the bounding box of the agent. In addition to the safety rate, we also calculate the \emph{average reward} that considers how good the task is accomplished. The agent is given a +10 reward if it reaches the goal and a -1 reward if it enters the dangerous set. Note that the agent might enter the dangerous set for many times before reaching the goal. The upper-bound of the total reward for an agent is +10, which is attained when the agent successfully reaches the goal and always stays in the safe set.}

\begin{figure}
    \centering
    \includegraphics[width=\linewidth]{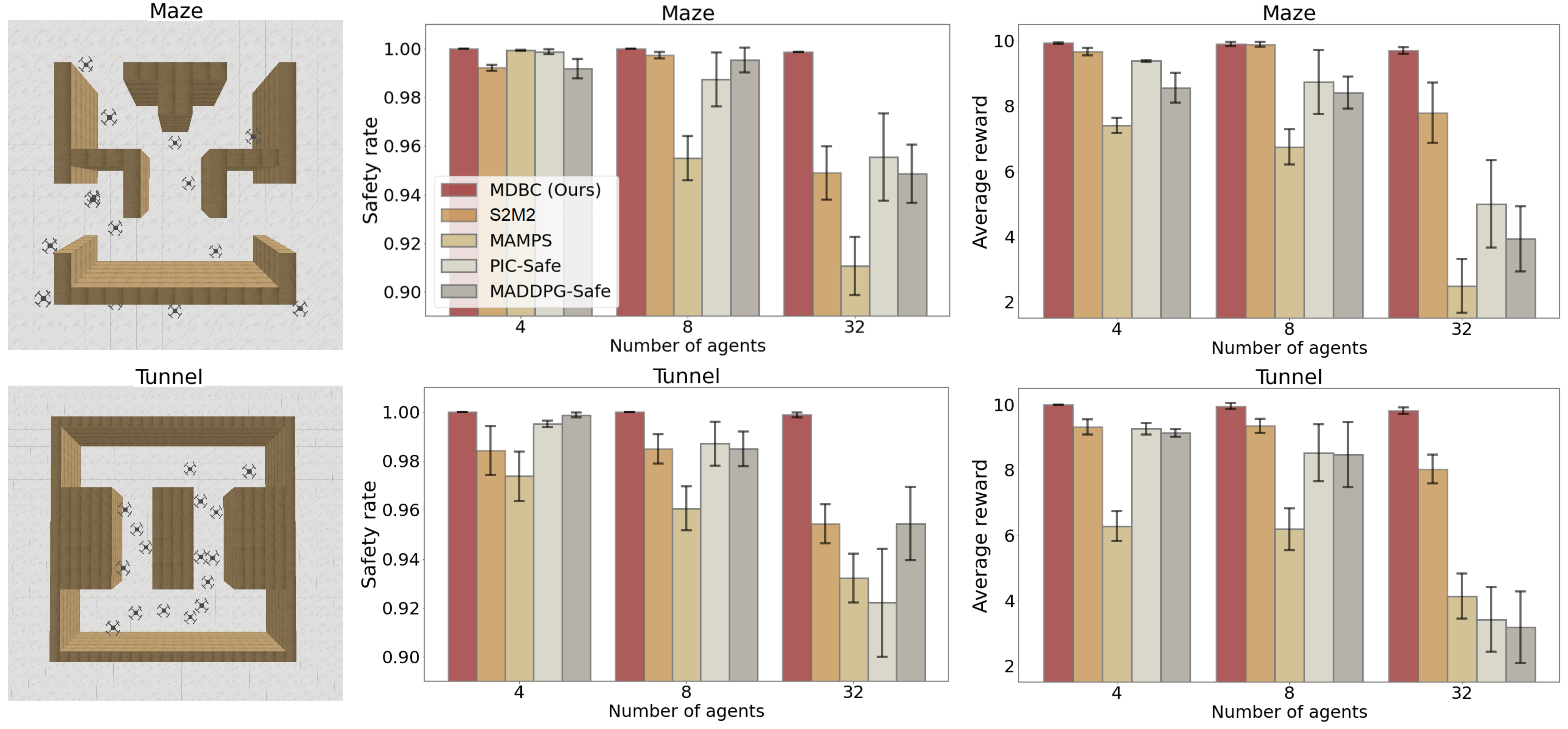}
    \caption{\footnotesize Environments and results of 3D tasks. In \textsf{Maze} and \textsf{Tunnel}, the initial and target locations of each drone are randomly chosen. The drones start from the initial locations and aim to reach the targets without collision. The results are taken after each method converged and are averaged over 10 independent trials.}
    \label{fig:results_3d}
    \vspace{-0.1cm}
\end{figure}

\begin{figure}
    \centering
    \includegraphics[width=\linewidth]{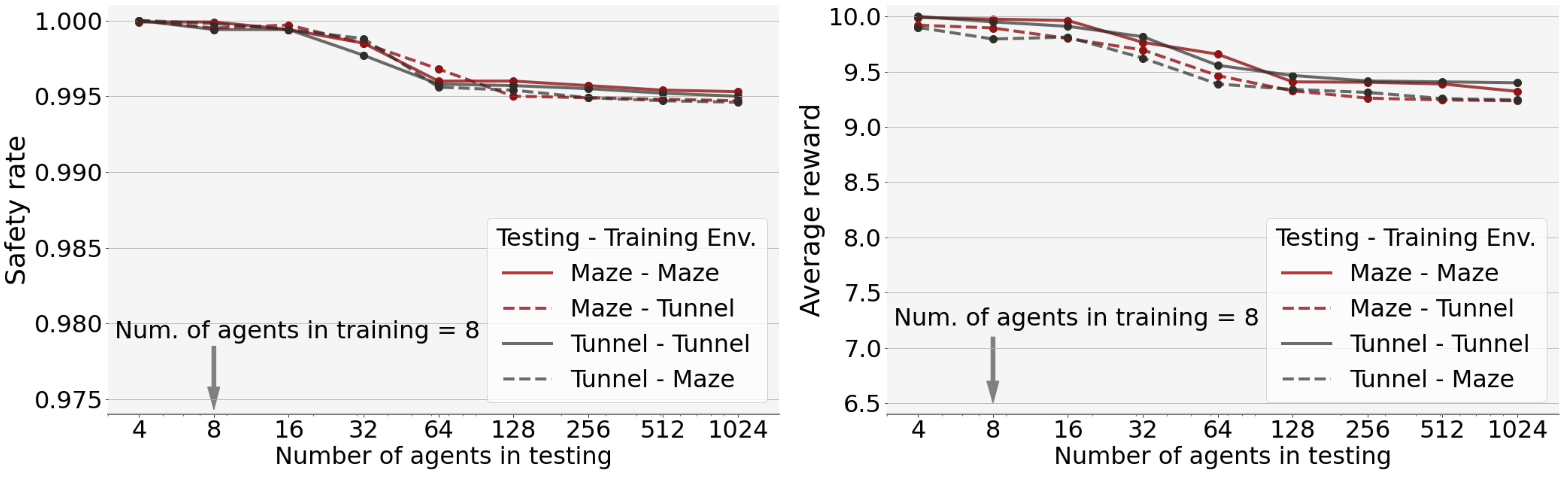}
    \caption{\footnotesize Generalization capability of MDBC in the 3D tasks. MDBC can be trained with 8 agent in one environment and generalize to 1024 agents in another environment in testing.}
    \label{fig:generalize_3d}
    \vspace{-0.2cm}
\end{figure}

\vspace{-0.4cm}
\paragraph{Ground Robots.}
We consider three tasks illustrated in Figure~\ref{fig:env_2d}. In the Navigation task, each agent starts from a random location and aims to reach a random goal. In the Predator-Prey task, the preys aim to gather the food while avoid being caught by the predators chasing the preys. We only consider the safety of preys but not predators. In the Nested-Rings task, the agents aim to follow the reference trajectories while avoid collision. \modify{In order for the RL-based agents to follow the rings trajectory, we also give the agents a negative reward proportional to the distance to the nearest point on the rings.} When adding more agents to an environment, we will also enlarge the area of the environment to ensure the overall density of agents remains similar. 

Figure~\ref{fig:results_2d} demonstrates that when the number of agents grows (e.g., 32 agents), our approach (MDBC) can still maintain a high safety rate and average reward, while other methods have much worse performance. However, our method still cannot guarantee that the agents are $100\%$ safe. The failure is mainly because we cannot make sure the decentralized CBF conditions are satisfied for every state-observation pair in testing even if they are satisfied on all training samples due to the generalization error. We also show the generalization capability of MDBC with up to 1024 in the appendix and also visualization results in the supplementary materials.

\begin{wrapfigure}{r}{0.5\textwidth}
  \begin{center}
    \includegraphics[width=0.42\textwidth]{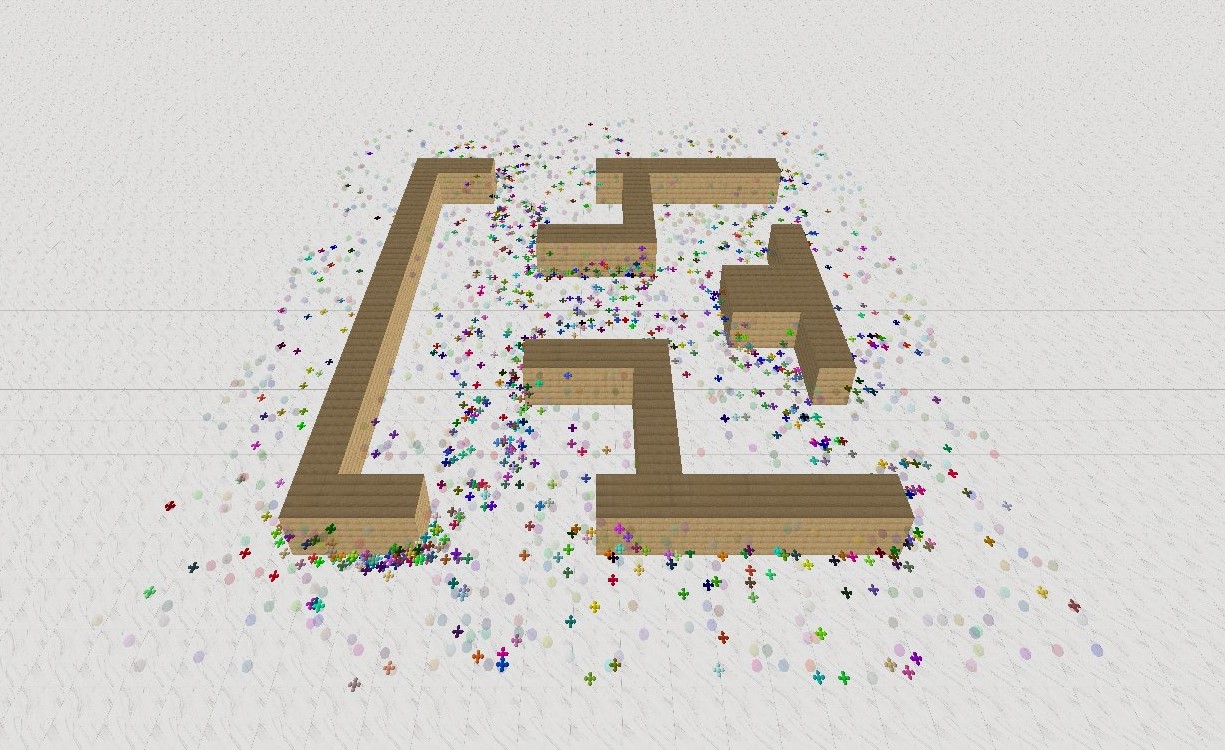}
  \end{center}
  \caption{\footnotesize Illustration of the \textsf{Maze} environment with 1024 drones. Videos can be found in the supplementary material. }
\end{wrapfigure}

\paragraph{Drones.} 
We experiment with 3D drones whose dynamics are even more complex. Figure~\ref{fig:results_3d} demonstrates the environments and the results of each approach. Similar to the results of ground robots, when there are a large number of agents (e.g., 32 agents), our method can still maintain a high reward and safety rate, while other methods have worse performance. Figure~\ref{fig:generalize_3d} shows the generalization capability of our method across different environments and number of agents. For each experiment, we train 8 agents during training, but test with up to 1024 agents. The extra agents are added by copying the neural network parameters of the trained 8 agents. Results show that our method has remarkable generalization capability to diverse scenarios. \modify{Another related work~\cite{chen2020guaranteed} can also handle the safe multi-drone control problem via CBF, but their CBF is handcrafted and based on quadratic programming to solve the $u_i$. Their paper only reported the results on two agents, and for 32 agents it would take more than 70 hours for a single run of evaluation (7000 steps and 36 seconds per step). By contrast, our method only takes $\sim200$s for a single run of evaluation with 32 agents, showing a significant advantage in computational efficiency.} For both the ground robot and drone experiments, we provide video demonstrations in the supplementary material.

\section{Conclusion}
This paper presents a novel approach of learning safe multi-agent control via jointly learning the decentralized control barrier functions as safety certificates. We provide the theoretical generalization bound, as well as the effective techniques to realize the learning framework in practice. Experiments show that our method significantly outperforms previous methods by being able to scale to an arbitrary number of agents, and demonstrates remarkable generalization capabilities to unseen and complex multi-agent environments.

\section{Acknowledgement}
The authors would like to thank Nikolai Matni for the valuable discussions. The authors acknowledge support from the DARPA Assured Autonomy under contract FA8750-19-C-0089. The views, opinions and/or findings expressed are those of the authors and should not be interpreted as representing the official views or policies of the Department of Defense or the U.S. Government.

\bibliography{iclr2021_conference}

\begin{thebibliography}{40}
\providecommand{\natexlab}[1]{#1}
\providecommand{\url}[1]{\texttt{#1}}
\expandafter\ifx\csname urlstyle\endcsname\relax
  \providecommand{\doi}[1]{doi: #1}\else
  \providecommand{\doi}{doi: \begingroup \urlstyle{rm}\Url}\fi

\bibitem[Alonso-Mora et~al.(2013)Alonso-Mora, Breitenmoser, Rufli, Beardsley,
  and Siegwart]{alonso2013optimal}
Javier Alonso-Mora, Andreas Breitenmoser, Martin Rufli, Paul Beardsley, and
  Roland Siegwart.
\newblock Optimal reciprocal collision avoidance for multiple non-holonomic
  robots.
\newblock In \emph{Distributed Autonomous Robotic Systems}, pp.\  203--216.
  Springer, 2013.

\bibitem[Ames et~al.(2014)Ames, Grizzle, and Tabuada]{ames2014control}
Aaron~D Ames, Jessy~W Grizzle, and Paulo Tabuada.
\newblock Control barrier function based quadratic programs with application to
  adaptive cruise control.
\newblock In \emph{Decision and Control (CDC), 2014 IEEE 53rd Annual Conference
  on}, pp.\  6271--6278. IEEE, 2014.

\bibitem[Ames et~al.(2017)Ames, Xu, Grizzle, and Tabuada]{ames2017control}
Aaron~D Ames, Xiangru Xu, Jessy~W Grizzle, and Paulo Tabuada.
\newblock Control barrier function based quadratic programs for safety critical
  systems.
\newblock \emph{IEEE Transactions on Automatic Control}, 62\penalty0
  (8):\penalty0 3861--3876, 2017.

\bibitem[Boffi et~al.(2020)Boffi, Tu, Matni, Slotine, and
  Sindhwani]{boffi2020learning}
Nicholas~M Boffi, Stephen Tu, Nikolai Matni, Jean-Jacques~E Slotine, and Vikas
  Sindhwani.
\newblock Learning stability certificates from data.
\newblock \emph{arXiv preprint arXiv:2008.05952}, 2020.

\bibitem[Borrmann et~al.(2015)Borrmann, Wang, Ames, and
  Egerstedt]{borrmann2015control}
Urs Borrmann, Li~Wang, Aaron~D Ames, and Magnus Egerstedt.
\newblock Control barrier certificates for safe swarm behavior.
\newblock \emph{IFAC-Papers-OnLine}, 48\penalty0 (27):\penalty0 68--73, 2015.

\bibitem[Chang et~al.(2019)Chang, Roohi, and Gao]{chang2019neural}
Ya-Chien Chang, Nima Roohi, and Sicun Gao.
\newblock Neural lyapunov control.
\newblock In \emph{Advances in Neural Information Processing Systems}, pp.\
  3245--3254, 2019.

\bibitem[Chen et~al.(2021)Chen, Li, Fan, and Williams]{chen2021scalable}
Jingkai Chen, Jiaoyang Li, Chuchu Fan, and Brian~C. Williams.
\newblock Scalable and safe multi-agent motion planning with nonlinear dynamics
  and bounded disturbances.
\newblock In \emph{Proceedings of the Thirty-Fifth AAAI Conference on
  Artificial Intelligence (AAAI 2021)}, 2021.

\bibitem[Chen et~al.(2017{\natexlab{a}})Chen, Everett, Liu, and
  How]{chen2017socially}
Yu~Fan Chen, Michael Everett, Miao Liu, and Jonathan~P How.
\newblock Socially aware motion planning with deep reinforcement learning.
\newblock In \emph{IEEE/RSJ International Conference on Intelligent Robots and
  Systems (IROS)}, pp.\  1343--1350. IEEE, 2017{\natexlab{a}}.

\bibitem[Chen et~al.(2017{\natexlab{b}})Chen, Peng, and
  Grizzle]{chen2017obstacle}
Yuxiao Chen, Huei Peng, and Jessy Grizzle.
\newblock Obstacle avoidance for low-speed autonomous vehicles with barrier
  function.
\newblock \emph{IEEE Transactions on Control Systems Technology}, 26\penalty0
  (1):\penalty0 194--206, 2017{\natexlab{b}}.

\bibitem[Chen et~al.(2020)Chen, Singletary, and Ames]{chen2020guaranteed}
Yuxiao Chen, Andrew Singletary, and Aaron~D Ames.
\newblock Guaranteed obstacle avoidance for multi-robot operations with limited
  actuation: a control barrier function approach.
\newblock \emph{IEEE Control Systems Letters}, 5\penalty0 (1):\penalty0
  127--132, 2020.

\bibitem[Cheng et~al.(2019)Cheng, Orosz, Murray, and Burdick]{cheng2019end}
Richard Cheng, G{\'a}bor Orosz, Richard~M Murray, and Joel~W Burdick.
\newblock End-to-end safe reinforcement learning through barrier functions for
  safety-critical continuous control tasks.
\newblock In \emph{AAAI Conference on Artificial Intelligence}, volume~33, pp.\
   3387--3395, 2019.

\bibitem[Cheng et~al.(2020)Cheng, Khojasteh, Ames, and Burdick]{cheng2020safe}
Richard Cheng, Mohammad~Javad Khojasteh, Aaron~D Ames, and Joel~W Burdick.
\newblock Safe multi-agent interaction through robust control barrier functions
  with learned uncertainties.
\newblock \emph{arXiv preprint arXiv:2004.05273}, 2020.

\bibitem[Choi et~al.(2020)Choi, Casta{\~n}eda, Tomlin, and
  Sreenath]{choi2020reinforcement}
Jason Choi, Fernando Casta{\~n}eda, Claire~J Tomlin, and Koushil Sreenath.
\newblock Reinforcement learning for safety-critical control under model
  uncertainty, using control lyapunov functions and control barrier functions.
\newblock \emph{arXiv preprint arXiv:2004.07584}, 2020.

\bibitem[Everett et~al.(2018)Everett, Chen, and How]{everett2018motion}
Michael Everett, Yu~Fan Chen, and Jonathan~P How.
\newblock Motion planning among dynamic, decision-making agents with deep
  reinforcement learning.
\newblock In \emph{IEEE/RSJ International Conference on Intelligent Robots and
  Systems (IROS)}, pp.\  3052--3059. IEEE, 2018.

\bibitem[Fan et~al.(2020)Fan, Miller, and Mitra]{fan2020fastest}
Chuchu Fan, Kristina Miller, and Sayan Mitra.
\newblock Fast and guaranteed safe controller synthesis for nonlinear vehicle
  models.
\newblock In Shuvendu~K. Lahiri and Chao Wang (eds.), \emph{Computer Aided
  Verification}, pp.\  629--652, Cham, 2020. Springer International Publishing.

\bibitem[Glotfelter et~al.(2017)Glotfelter, Cort{\'e}s, and
  Egerstedt]{glotfelter2017nonsmooth}
Paul Glotfelter, Jorge Cort{\'e}s, and Magnus Egerstedt.
\newblock Nonsmooth barrier functions with applications to multi-robot systems.
\newblock \emph{IEEE control systems letters}, 1\penalty0 (2):\penalty0
  310--315, 2017.

\bibitem[Jin et~al.(2020)Jin, Wang, Yang, and Mou]{jin2020neural}
Wanxin Jin, Zhaoran Wang, Zhuoran Yang, and Shaoshuai Mou.
\newblock Neural certificates for safe control policies.
\newblock \emph{arXiv preprint arXiv:2006.08465}, 2020.

\bibitem[Liu et~al.(2020)Liu, Yeh, and Schwing]{liu2020pic}
Iou-Jen Liu, Raymond~A Yeh, and Alexander~G Schwing.
\newblock Pic: permutation invariant critic for multi-agent deep reinforcement
  learning.
\newblock In \emph{Conference on Robot Learning}, pp.\  590--602, 2020.

\bibitem[Lowe et~al.(2017)Lowe, Wu, Tamar, Harb, Abbeel, and
  Mordatch]{lowe2017multi}
Ryan Lowe, Yi~I Wu, Aviv Tamar, Jean Harb, OpenAI~Pieter Abbeel, and Igor
  Mordatch.
\newblock Multi-agent actor-critic for mixed cooperative-competitive
  environments.
\newblock In \emph{Advances in Neural Information Processing Systems}, pp.\
  6379--6390, 2017.

\bibitem[{Ma} et~al.(2019){Ma}, {Harabor}, {Stuckey}, {Li}, and
  {Koenig}]{ma2019searching}
Hang {Ma}, Daniel {Harabor}, Peter.~J {Stuckey}, Jiaoyang {Li}, and Sven
  {Koenig}.
\newblock Searching with consistent prioritization for multi-agent path
  finding.
\newblock \emph{AAAI 2019 : Thirty-Third AAAI Conference on Artificial
  Intelligence}, 33\penalty0 (1):\penalty0 7643--7650, 2019.

\bibitem[Mitchell et~al.(2005)Mitchell, Bayen, and Tomlin]{mitchell2005time}
Ian~M Mitchell, Alexandre~M Bayen, and Claire~J Tomlin.
\newblock A time-dependent hamilton-jacobi formulation of reachable sets for
  continuous dynamic games.
\newblock \emph{IEEE Transactions on automatic control}, 50\penalty0
  (7):\penalty0 947--957, 2005.

\bibitem[Prajna et~al.(2007)Prajna, Jadbabaie, and Pappas]{prajna2007framework}
Stephen Prajna, Ali Jadbabaie, and George~J Pappas.
\newblock A framework for worst-case and stochastic safety verification using
  barrier certificates.
\newblock \emph{IEEE Transactions on Automatic Control}, 52\penalty0
  (8):\penalty0 1415--1428, 2007.

\bibitem[Qi et~al.(2017)Qi, Su, Mo, and Guibas]{charles2017pointnet}
Charles~R. Qi, Hao Su, Kaichun Mo, and Leonidas~J. Guibas.
\newblock Pointnet: Deep learning on point sets for 3d classification and
  segmentation.
\newblock In \emph{Proceedings of the IEEE Conference on Computer Vision and
  Pattern Recognition (CVPR)}, July 2017.

\bibitem[Qie et~al.(2019)Qie, Shi, Shen, Xu, Li, and Wang]{qie2019joint}
Han Qie, Dianxi Shi, Tianlong Shen, Xinhai Xu, Yuan Li, and Liujing Wang.
\newblock Joint optimization of multi-{UAV} target assignment and path planning
  based on multi-agent reinforcement learning.
\newblock \emph{IEEE Access}, 7:\penalty0 146264--146272, 2019.

\bibitem[Robey et~al.(2020)Robey, Hu, Lindemann, Zhang, Dimarogonas, Tu, and
  Matni]{robey2020learning}
Alexander Robey, Haimin Hu, Lars Lindemann, Hanwen Zhang, Dimos~V Dimarogonas,
  Stephen Tu, and Nikolai Matni.
\newblock Learning control barrier functions from expert demonstrations.
\newblock \emph{arXiv preprint arXiv:2004.03315}, 2020.

\bibitem[Rodríguez-Seda et~al.(2014)Rodríguez-Seda, Tang, Spong, and
  Stipanović]{erick2014trajectory}
Erick~J. Rodríguez-Seda, Chinpei Tang, Mark~W. Spong, and Dušan~M.
  Stipanović.
\newblock Trajectory tracking with collision avoidance for nonholonomic
  vehicles with acceleration constraints and limited sensing.
\newblock \emph{The International Journal of Robotics Research}, 33\penalty0
  (12):\penalty0 1569--1592, 2014.

\bibitem[Saveriano \& Lee(2020)Saveriano and Lee]{saveriano2020learning}
Matteo Saveriano and Dongheui Lee.
\newblock Learning barrier functions for constrained motion planning with
  dynamical systems.
\newblock \emph{arXiv preprint arXiv:2003.11500}, 2020.

\bibitem[Shalev-Shwartz et~al.(2016)Shalev-Shwartz, Shammah, and
  Shashua]{shalev2016safe}
Shai Shalev-Shwartz, Shaked Shammah, and Amnon Shashua.
\newblock Safe, multi-agent, reinforcement learning for autonomous driving.
\newblock \emph{arXiv preprint arXiv:1610.03295}, 2016.

\bibitem[{Srebro} et~al.(2010){Srebro}, {Sridharan}, and
  {Tewari}]{srebro2010smoothness}
Nathan {Srebro}, Karthik {Sridharan}, and Ambuj {Tewari}.
\newblock Smoothness, low noise and fast rates.
\newblock In \emph{Advances in Neural Information Processing Systems 23}, pp.\
  2199--2207, 2010.

\bibitem[Srinivasan et~al.(2020)Srinivasan, Dabholkar, Coogan, and
  Vela]{srinivasan2020synthesis}
Mohit Srinivasan, Amogh Dabholkar, Samuel Coogan, and Patricio Vela.
\newblock Synthesis of control barrier functions using a supervised machine
  learning approach.
\newblock \emph{arXiv preprint arXiv:2003.04950}, 2020.

\bibitem[Taylor et~al.(2020)Taylor, Singletary, Yue, and
  Ames]{taylor2020learning}
Andrew Taylor, Andrew Singletary, Yisong Yue, and Aaron Ames.
\newblock Learning for safety-critical control with control barrier functions.
\newblock In \emph{Learning for Dynamics and Control}, pp.\  708--717, 2020.

\bibitem[Tu et~al.(2021)Tu, Robey, and Matni]{tu2021closing}
Stephen Tu, Alexander Robey, and Nikolai Matni.
\newblock Closing the closed-loop distribution shift in safe imitation
  learning.
\newblock \emph{arXiv preprint arXiv:2102.09161}, 2021.

\bibitem[Van~den Berg et~al.(2008)Van~den Berg, Lin, and
  Manocha]{van2008reciprocal}
Jur Van~den Berg, Ming Lin, and Dinesh Manocha.
\newblock Reciprocal velocity obstacles for real-time multi-agent navigation.
\newblock In \emph{IEEE International Conference on Robotics and Automation
  (ICRA)}, pp.\  1928--1935. IEEE, 2008.

\bibitem[Wang et~al.(2017)Wang, Ames, and Egerstedt]{wang2017safety}
Li~Wang, Aaron~D Ames, and Magnus Egerstedt.
\newblock Safety barrier certificates for collisions-free multirobot systems.
\newblock \emph{IEEE Transactions on Robotics}, 33\penalty0 (3):\penalty0
  661--674, 2017.

\bibitem[Wang et~al.(2018)Wang, Theodorou, and Egerstedt]{wang2018safe}
Li~Wang, Evangelos~A Theodorou, and Magnus Egerstedt.
\newblock Safe learning of quadrotor dynamics using barrier certificates.
\newblock In \emph{IEEE International Conference on Robotics and Automation
  (ICRA)}, pp.\  2460--2465. IEEE, 2018.

\bibitem[Wieland \& Allg{\"o}wer(2007)Wieland and
  Allg{\"o}wer]{wieland2007constructive}
Peter Wieland and Frank Allg{\"o}wer.
\newblock Constructive safety using control barrier functions.
\newblock \emph{IFAC Proceedings Volumes}, 40\penalty0 (12):\penalty0 462--467,
  2007.

\bibitem[Xu et~al.(2017)Xu, Grizzle, Tabuada, and Ames]{xu2017correctness}
Xiangru Xu, Jessy~W Grizzle, Paulo Tabuada, and Aaron~D Ames.
\newblock Correctness guarantees for the composition of lane keeping and
  adaptive cruise control.
\newblock \emph{IEEE Transactions on Automation Science and Engineering},
  15\penalty0 (3):\penalty0 1216--1229, 2017.

\bibitem[Zhang et~al.(2018)Zhang, Yang, Liu, Zhang, and
  Basar]{pmlr-v80-zhang18n}
Kaiqing Zhang, Zhuoran Yang, Han Liu, Tong Zhang, and Tamer Basar.
\newblock Fully decentralized multi-agent reinforcement learning with networked
  agents.
\newblock In \emph{International Conference on Machine Learning}, pp.\
  5872--5881, 2018.

\bibitem[Zhang \& Bastani(2019)Zhang and Bastani]{zhang2019mamps}
Wenbo Zhang and Osbert Bastani.
\newblock Mamps: Safe multi-agent reinforcement learning via model predictive
  shielding.
\newblock \emph{arXiv preprint arXiv:1910.12639}, 2019.

\bibitem[Zheng et~al.(2017)Zheng, Yang, Cai, Zhang, Wang, and
  Yu]{zheng2017magent}
Lianmin Zheng, Jiacheng Yang, Han Cai, Weinan Zhang, Jun Wang, and Yong Yu.
\newblock Magent: A many-agent reinforcement learning platform for artificial
  collective intelligence.
\newblock \emph{arXiv preprint arXiv:1712.00600}, 2017.

\end{thebibliography}
\bibliographystyle{iclr2021_conference}

\clearpage
\appendix
\section{Proof of Proposition 1}

Since $\dot{h}_i = \nabla_{s_i} h_i \cdot f_i(s_i, u_i)+\nabla_{o_i}h_i\cdot\dot{o_i}(t)$, the satisfaction of (\ref{eq:decCBF}) implies:
\begin{equation}  \tag{10} \label{eqn:sup_dec_cbf_conditions}
  \begin{aligned}
&\forall~ (s_i, o_i) \in {\mathcal{X}_{i, 0}}, &h_i(s_i, o_i) \ge 0 \\
&\forall~ (s_i, o_i) \in {\mathcal{X}_{i,d}}, &h_i(s_i, o_i) < 0 \\
&\forall~ (s_i, o_i) \in \left\{ {(s_i, o_i) \mid h_i(s_i, o_i) \ge 0} \right\}, &\dot{h}_i + \alpha \left( h_i \right) \ge 0 .
\end{aligned}
\end{equation}
The initial condition $(s_i(0), o_i(0)) \in \left\{ {(s_i, o_i) \mid h_i(s_i, o_i) \ge 0} \right\}$ means that $h_i \ge 0$ at time $t=0$. Since $\dot{h}_i + \alpha \left( h_i \right) \ge 0$, $h_i$ will stay non-negative, which is proved in Section~2 of \cite{ames2014control}. This means that $(s_i(t), o_i(t)) \notin \mathcal{X}_{i,d}$ for $\forall t > 0$. Thus for $\forall i$ and $\forall t > 0$, agent $i$ would not enter the dangerous set, and the whole multi-agent system is safe by Definition \ref{def:safety}.
\hfill\qed

\textbf{Remark.} Since the input dimension and permutation of $h_i$ can change with time, the time derivative of $h_i$ does not exist everywhere but almost everywhere. In fact, in the safety guarantee provided by Proposition~\ref{prop:macbf}, we do not require the time derivative of $h_i$ to exist everywhere. The $h_i$ can also be non-smooth. Based on (2) of \cite{glotfelter2017nonsmooth}, we can define $\dot{h}_i$ as the generalized gradient that always exists when $h_i$ is non-smooth and the time derivative exists almost everywhere. Then based on Theorem 2 of \cite{glotfelter2017nonsmooth}, as long as the CBF conditions are satisfied under the generalized gradient, then $h_i$ is a valid CBF and the safety can be guaranteed.

\paragraph{Global CBF from decentralized CBFs.} In addition to Proposition~\ref{prop:macbf}, another way to prove the global safety of the multi-agent system under the decentralized CBFs is to construct the global CBF $h_g: \mathcal{S} \mapsto \mathbb{R}$ from individual CBFs by taking the minimum, as is in:
\begin{equation} \tag{11}
    \begin{aligned}
    h_g(s) := \min\{h_1(s\downarrow s_1, s\downarrow o_1), h_2(s\downarrow s_2, s\downarrow o_2), \cdots, h_N(s\downarrow s_N, s\downarrow o_N)\},
    \end{aligned}
\end{equation}
where $s\downarrow s_i$ is the projection of the global state onto the state of agent $i$ and $s\downarrow o_i$ is the projection of the global state to the observation of agent $i$. Then the following proposition guarantees the global safety of the multi-agent system.
\begin{proposition}
If (\ref{eqn:sup_dec_cbf_conditions}) is satisfied for every agent $i$, then the global CBF $h_g(s)$ satisfies:
\begin{equation}  \tag{12} \label{eqn:sup_glob_cbf_conditions}
  \begin{aligned}
&\forall~ s \in {\mathcal{S}_{0}}, &h_g(s) \ge 0 \\
&\forall~ s \in {\mathcal{S}_{d}}, &h_g(s) < 0 \\
&\forall~ s \in \left\{ { s \mid h_g(s) \ge 0} \right\}, &\dot{h}_g + \alpha \left( h_g \right) \ge 0,
\end{aligned}
\end{equation}
\end{proposition}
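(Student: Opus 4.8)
The plan is to verify the three global CBF conditions in (\ref{eqn:sup_glob_cbf_conditions}) separately, exploiting the structure of $h_g$ as a pointwise minimum of the individual $h_i$ evaluated at the appropriate projections. Throughout, I would fix a joint state $s$ and write $I(s) := \{i \mid h_i(s\downarrow s_i, s\downarrow o_i) = h_g(s)\}$ for the set of \emph{active} agents, i.e., those attaining the minimum. The first two conditions are immediate from the definition of the minimum, while the third is the real content.

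For the first condition, I would use that the global initial set $\mathcal{S}_0$ consists of joint states whose projection onto each agent lies in that agent's initial set $\mathcal{X}_{i,0}$. Then for $s \in \mathcal{S}_0$ the first line of (\ref{eqn:sup_dec_cbf_conditions}) gives $h_i \ge 0$ for every $i$, so the minimum satisfies $h_g(s) \ge 0$. The second condition is dual: by the definition $\mathcal{S}_s = \{s \mid \forall i,\, \bar d_i(s) \ge \kappa_s\}$, membership $s \in \mathcal{S}_d = \mathcal{S}\setminus\mathcal{S}_s$ means some agent $i$ has $\bar d_i(s) < \kappa_s$, i.e., its projection lies in $\mathcal{X}_{i,d}$. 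The second line of (\ref{eqn:sup_dec_cbf_conditions}) forces $h_i < 0$ for that agent, and since $h_g \le h_i$ we conclude $h_g(s) < 0$.

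The third condition is the crux. Fix any $s$ with $h_g(s) \ge 0$. Every active $i \in I(s)$ satisfies $h_i = h_g \ge 0$, so its state-observation lies in $\{(s_i,o_i) \mid h_i \ge 0\}$ and the third line of (\ref{eqn:sup_dec_cbf_conditions}) applies, yielding $\dot h_i + \alpha(h_i) \ge 0$, i.e., $\dot h_i \ge -\alpha(h_g)$ since $h_i = h_g$. The key fact I would invoke is a Danskin-type identity: along a trajectory, the forward (upper right Dini) time derivative of a pointwise minimum equals the minimum over the active set, $\dot h_g = \min_{i \in I(s)} \dot h_i$. Taking the minimum over $i \in I(s)$ preserves the bound $\dot h_i \ge -\alpha(h_g)$, so $\dot h_g \ge -\alpha(h_g)$, which is exactly $\dot h_g + \alpha(h_g) \ge 0$, and forward invariance of $\{h_g \ge 0\}$ then follows by the comparison argument already used in Proposition~\ref{prop:macbf}.

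The delicate point — and the main obstacle I would need to argue carefully — is justifying the envelope identity for $\dot h_g$ in the appropriate nonsmooth sense, since $h_g$ is non-differentiable wherever $I(s)$ is not a singleton, the individual $h_i$ may themselves be non-smooth, and the dimension and permutation of each $o_i$ can change in time. I would route this through the generalized-gradient machinery of \cite{glotfelter2017nonsmooth} already invoked in the Remark: by Theorem~2 there it suffices that the CBF derivative inequality hold for the generalized gradient $\partial h_g$, and since $\partial h_g(s)$ is contained in the convex hull of the active $\{\nabla h_i(s)\}_{i \in I(s)}$, the bound $\dot h_i \ge -\alpha(h_g)$ for each active $i$ transfers by convexity to every element of $\partial h_g$. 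This closes the third condition and completes the verification.
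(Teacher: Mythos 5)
Your proof is correct and uses the same overall decomposition as the paper's: the first two conditions are read off from the definition of the minimum exactly as the paper does (initial set via ``all agents safe,'' dangerous set via ``some agent unsafe''), and the third condition is reduced to the decentralized derivative inequality for agents attaining the minimum. Where you genuinely diverge is in how the time derivative of the minimum is justified. The paper simply sets $i^{*} = \argmin_i h_i(s\downarrow s_i, s\downarrow o_i)$ and asserts $\dot{h}_g + \alpha(h_g) = \dot{h}_{i^{*}} + \alpha(h_{i^{*}})$, an identity that is only literally valid when the minimizer is unique; at states where two or more agents tie, $h_g$ is nonsmooth along the trajectory and that step is informal. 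Your route supplies exactly the missing justification: the envelope identity for the upper right Dini derivative, $\dot{h}_g = \min_{i \in I(s)} \dot{h}_i$ over the active set $I(s)$, plus the containment of the Clarke generalized gradient of a finite minimum in the convex hull of the active gradients, so the bound $\dot{h}_i \ge -\alpha(h_g)$ for each active $i$ transfers by convexity to every element of $\partial h_g$, and Theorem~2 of \cite{glotfelter2017nonsmooth} then yields invariance. What the paper's version buys is brevity and transparency; what yours buys is rigor at the tie points — a gap the paper's own Remark only addresses for the nonsmoothness of the individual $h_i$ (due to changing observation dimension), not for the nonsmoothness introduced by the $\min$ itself. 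One small note: you restrict the derivative bound to active agents, which suffices, but it is worth stating explicitly (as the paper does) that $h_g(s) \ge 0$ forces $h_i \ge 0$ for \emph{all} $i$, so every active agent indeed lies in the set where the third decentralized condition is available.
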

where $\mathcal{S}_0 = \{s\in \mathcal{S}| \forall i, (s\downarrow s_i, s\downarrow o_i) \in \mathcal{X}_{i, 0}\}$ and $\mathcal{S}_d = \{s\in \mathcal{S}| \exists i, (s\downarrow s_i, s\downarrow o_i) \in \mathcal{X}_{i, d}\}$. Then $\forall t > 0, h_g(s(t)) \geq 0$ and $s\not\in \mathcal{S}_d$, which means the multi-agent system is globally safe.
\begin{proof}
Let us first prove that the satisfaction of (\ref{eqn:sup_dec_cbf_conditions}) implies the satisfaction of (\ref{eqn:sup_glob_cbf_conditions}). By definition of $\mathcal{S}_0$, when $s \in {\mathcal{S}_{0}}$, we have $\forall i, (s\downarrow s_i, s\downarrow o_i) \in \mathcal{X}_{i, 0}$, which means $\forall i, h_i(s\downarrow s_i, s\downarrow o_i) \geq 0$. Thus $h_g(s) = \min_i \{h_i(s\downarrow s_i, s\downarrow o_i)\} \geq 0$. When $s \in {\mathcal{S}_{d}}$, we have $\exists i, (s\downarrow s_i, s\downarrow o_i) \in \mathcal{X}_{i, d}$, which means $\exists i, h_i(s\downarrow s_i, s\downarrow o_i) < 0$. Thus $h_g(s) = \min_i \{h_i(s\downarrow s_i, s\downarrow o_i)\} < 0$. When $s \in \left\{ { s \mid h_g(s) \ge 0} \right\}$, we have $\forall i, (s\downarrow s_i, s\downarrow o_i) \in \left\{ {(s\downarrow s_i, s\downarrow o_i) \mid h_i(s\downarrow s_i, s\downarrow o_i) \ge 0} \right\}$. So $\forall i, \dot{h}_i + \alpha \left( h_i \right) \ge 0$. Let $i^{*} = \argmin_i \{h_i(s\downarrow s_i, s\downarrow o_i)\}$. Then $h_g(s) = h_{i^{*}}(s\downarrow s_{i^{*}}, s\downarrow o_{i^{*}})$ and $\dot{h}_g + \alpha \left( h_g \right) = \dot{h}_{i^{*}} + \alpha \left( h_{i^{*}} \right) \ge 0$. Hence the satisfaction of (\ref{eqn:sup_dec_cbf_conditions}) implies the satisfaction of (\ref{eqn:sup_glob_cbf_conditions}). Then based on Section~2 of \cite{ames2014control}, we have $h_g(s(t)) \geq 0, \forall t > 0 $. This means $s(t) \not\in \mathcal{S}_d, \forall t > 0$, and the multi-agent system is globally safe.
\end{proof}

\section{Generalization Error Bound of the Decentralized CBF}
\label{app:b}

To answer how well the learned $\pi_i(s_i, o_i)$ and $h_i(s_i, o_i)$ can generalize to unseen scenarios, we will provide a generalization bound with probabilistic guarantees. We denote the solution to (\ref{eqn:y_function_optimization}) as $\hat{h}_i$ and $\hat{\pi}_i$. Denote the Rademacher complexity of the function class of $y_i$ as $\mathcal{R}_{z_i}(\mathcal{Y}_i)$, whose definition could be found in Appendix~\ref{app:b}. \modify{Also we define $\epsilon_i$ as the probability that the decentralized CBF conditions are violated for agent $i$ over randomly sampled trajectories (not necessarily the samples encountered in training). Under such definition, $\epsilon_i$ measures the \emph{generalization error} and can be expressed as {\small $\epsilon_i = \mathbb{P}_{\tau_i\sim \mathcal{D}_i} \left[y_i(\tau_i, \hat{h}_i, \hat{u}_i) \leq 0 \right]$}}. Then we have Proposition~\ref{prop:guarantees} that provides generalization guarantees for all the learned $\hat{h}_i$ and $\hat{\pi}_i$. 

\begin{proposition}[Generalization Error Bound of Learning Decentralized CBF] \label{prop:guarantees}
Assume that $|y|\leq b$ and (\ref{eqn:y_function_optimization}) is feasible. Let $\hat{h}_i$ and $\hat{u}_i$ be the solutions to (\ref{eqn:y_function_optimization}) and $\mu$ be a universal positive constant vector. Recall that $N$ is the number of agents. Then, for any $\delta \in (0, 1)$ the following statement holds:
{\small 
\begin{equation} \tag{6} \label{eqn:generalization_bound}
    \begin{aligned}
    \mathbb{P} \left[ \bigcap_{i=1}^{N} \left( \epsilon_i \leq \mu_i \frac{\log^3 z_i}{\gamma^2}\mathcal{R}_{z_i}^2(\mathcal{Y}_i) +\mu_i \frac{\log(N\log(4b/\gamma)/\delta)}{z_i} \right) \right] \geq 1-\delta
    \end{aligned}.
\end{equation}
}
\end{proposition}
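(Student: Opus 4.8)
The plan is to read (6) as a margin-based generalization bound and prove it from three ingredients: a reduction to a Lipschitz margin loss, a fast-rate (realizable) uniform-convergence bound that accounts for the \emph{square} of the Rademacher complexity and the $\log^3 z_i$ overhead, and two union bounds (over margin scales and over the $N$ agents) that produce the final confidence term. Throughout I would exploit the fact, already noted before (5), that feasibility of (5) means the learned pair $(\hat h_i, \hat u_i)$ attains margin at least $\gamma$ on every training trajectory.

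First I would fix an agent $i$ and introduce the ramp margin loss $r_\gamma(y) := \min\{1, \max\{0, 1 - y/\gamma\}\}$, which is $1/\gamma$-Lipschitz, vanishes whenever $y \geq \gamma$, and dominates the indicator $\mathbb{I}[y \leq 0]$. Composing with $y_i$ and taking expectations gives $\epsilon_i = \mathbb{P}_{\tau_i \sim \mathcal{D}_i}[y_i(\tau_i, \hat h_i, \hat u_i) \leq 0] \leq \mathbb{E}_{\tau_i \sim \mathcal{D}_i}[r_\gamma(y_i(\tau_i, \hat h_i, \hat u_i))]$. Because $(\hat h_i, \hat u_i)$ solves (5), every training trajectory satisfies $y_i(\tau_i^j, \hat h_i, \hat u_i) \geq \gamma$, so the \emph{empirical} ramp loss is exactly $0$. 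This places us in the realizable regime, which is precisely what permits a $1/z_i$ rate rather than the usual $1/\sqrt{z_i}$.

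Next I would invoke an optimistic-rate (local Rademacher) uniform-convergence bound for the $[0,1]$-valued class $r_\gamma \circ \mathcal{Y}_i$: when the empirical risk vanishes, the population risk is controlled, up to polylogarithmic factors in $z_i$, by the \emph{square} of the Rademacher complexity of the loss class plus a $\log(1/\delta_i)/z_i$ deviation term. Talagrand's contraction lemma then strips the loss, giving $\mathcal{R}_{z_i}(r_\gamma \circ \mathcal{Y}_i) \leq \gamma^{-1}\mathcal{R}_{z_i}(\mathcal{Y}_i)$; squaring yields the $\mathcal{R}_{z_i}^2(\mathcal{Y}_i)/\gamma^2$ factor, while the self-bounding/peeling step underlying the fast rate supplies the $\log^3 z_i$ overhead. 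The outcome is a single-agent, single-scale bound of the form $\epsilon_i \leq \mu_i \gamma^{-2}\log^3(z_i)\,\mathcal{R}_{z_i}^2(\mathcal{Y}_i) + \mu_i \log(1/\delta_i)/z_i$.

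Finally I would assemble the two union bounds. Since the fast-rate step is proved at a fixed scale, I would make it uniform over the margin by unioning over a dyadic grid of scales in $[\gamma, b]$ (using $|y| \leq b$), which has $O(\log(b/\gamma))$ elements and produces the $\log(4b/\gamma)$ factor; a second union over the $N$ agents, allocating failure probability $\delta/(N\log(4b/\gamma))$ to each (agent, scale) pair, turns each $\log(1/\delta_i)$ into $\log(N\log(4b/\gamma)/\delta)$ and makes the intersection over $i=1,\dots,N$ hold with probability at least $1-\delta$, recovering exactly (6). I expect the main obstacle to be the fast-rate step: extracting the $\mathcal{R}^2$ (rather than $\mathcal{R}$) dependence in the zero-empirical-error regime requires a local Rademacher / sub-root fixed-point or relative-deviation argument, and faithfully tracking the resulting $\log^3 z_i$ polylog is the delicate part, whereas the margin reduction, contraction, and union bounds are comparatively routine.
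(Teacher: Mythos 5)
Your high-level architecture is the same as the paper's. The paper's entire proof consists of (i) invoking Theorem~5 of \cite{srebro2010smoothness} as a black box at confidence $\delta/N$ per agent --- that theorem already packages your three ingredients: a margin surrogate whose empirical risk vanishes by feasibility of (\ref{eqn:y_function_optimization}), an optimistic (fast-rate) bound yielding the $\gamma^{-2}\log^3(z_i)\,\mathcal{R}_{z_i}^2(\mathcal{Y}_i)$ term, and the union over margin scales yielding the $\log(4b/\gamma)$ factor --- and (ii) a union bound over the $N$ agents, exactly as in your final paragraph (the paper writes the resulting confidence as $(1-\delta/N)^N$, which is what independence would give, but the standard union bound delivers the stated $1-\delta$ anyway). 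So you are reconstructing the right theorem; the problem is that your reconstruction of it is not sound.

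The gap is your fast-rate step. You invoke a purported generic theorem: for a $[0,1]$-valued loss class, zero empirical risk implies population risk $\lesssim \mathrm{polylog}(z)\cdot\mathcal{R}_z^2(\text{loss class}) + \log(1/\delta)/z$. No such theorem exists, and your own loss class refutes it. Fix $\gamma$, take a domain $\Omega$ of $2n$ points under the uniform distribution, and let $\mathcal{Y} = \{y_A : A \subset \Omega,\ |A| = n\}$ with $y_A = \gamma(1 - n^{-1/2})$ on $A$ and $y_A = \gamma$ off $A$. Any $n$ samples occupy at most $n$ points, so some $y_A$ has margin exactly $\gamma$ on every sample, hence zero empirical ramp loss (and it is feasible for (\ref{eqn:y_function_optimization})); yet $\mathbb{E}[r_\gamma(y_A)] = \tfrac{1}{2}n^{-1/2}$, while $r_\gamma \circ \mathcal{Y} = \{n^{-1/2}\mathbb{I}_A\}$ has Rademacher complexity $\Theta(n^{-1/2})$, so your claimed bound would read $\tilde{O}(1/n)$ --- a contradiction. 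The root cause is the property you half-name: the fast rate rests on the self-bounding inequality $|\phi'|^2 \leq 4H\phi$ of \emph{smooth} nonnegative losses, and the piecewise-linear ramp violates it at its kink (the loss tends to $0$ while the derivative stays at $1/\gamma$). This is exactly why \cite{srebro2010smoothness} use a quadratically-smoothed margin surrogate with smoothness constant $H = O(1/\gamma^2)$: on the example above that surrogate's population risk is $O(H\gamma^2/n) = O(1/n)$, so the smoothing is load-bearing, not a cosmetic technicality. Consequently the $1/\gamma^2$ in (\ref{eqn:generalization_bound}) enters as the smoothness constant in their Theorem~1, whose complexity term is $\mathcal{R}_{z_i}(\mathcal{Y}_i)$ of the real-valued class itself; it is \emph{not} obtained by squaring a Talagrand contraction of the loss class. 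Your margin reduction, the realizability observation, and both union bounds are fine and match what is inside the citation; replacing your fast-rate step by the smooth-surrogate argument (or simply by the citation, as the paper does) closes the gap.
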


\begin{proof}
Note that $\epsilon_i = \mathbb{P}_{\tau_i\sim \mathcal{D}_i} \left[y_i(\tau_i, \hat{h}_i, \hat{u}_i) \leq 0 \right] = \mathbb{E}_{\tau_i\sim \mathcal{D}_i} \left[\mathbb{I}\left(y_i(\tau_i, \hat{h}_i, \hat{u}_i) \leq 0 \right)\right]$. Under zero empirical loss, using the Theorem~5 in \cite{srebro2010smoothness}, for any $\frac{\delta}{N} > 0$, the following statement holds with probability at least $1- \frac{\delta}{N}$:
\begin{equation}  \tag{13}
    \begin{aligned}
    \epsilon_i \leq \mu_i \frac{\log^3 z_i}{\gamma^2}\mathcal{R}_{z_i}^2(\mathcal{Y}_i) +\mu_i \frac{\log(N\log(4b/\gamma)/\delta)}{z_i}
    \end{aligned}
\end{equation}
where $\mu_i>0$ is some universal constant. By taking the union bound over all $N$ agents,  the following statement holds with probability at least $(1- \frac{\delta}{N})^N$:
\begin{equation}  \tag{14}
    \begin{aligned}
        \bigcap_{i=1}^{N} \left( \epsilon_i \leq \mu_i \frac{\log^3 z_i}{\gamma^2}\mathcal{R}_{z_i}^2(\mathcal{Y}_i) +\mu_i \frac{\log(N\log(4b/\gamma)/\delta)}{z_i} \right)
    \end{aligned}.
\end{equation}
Since $(1- \frac{\delta}{N})^N > 1-\delta$ for $\delta \in (0, 1)$, we have:
\begin{equation}  \tag{15}
    \begin{aligned}
    \mathbb{P} \left[ \bigcap_{i=1}^{N} \left( \epsilon_i \leq \mu_i \frac{\log^3 z_i}{\gamma^2}\mathcal{R}_{z_i}^2(\mathcal{Y}_i) +\mu_i \frac{\log(N\log(4b/\gamma)/\delta)}{z_i} \right) \right] \geq 1-\delta
    \end{aligned},
\end{equation}
which completes the proof.
\end{proof}

The Rademacher complexity $\mathcal{R}_{z_i}(\mathcal{Y}_i)$ is defined as:
\begin{equation}
    \begin{aligned}
    \mathcal{R}_{z_i}(\mathcal{Y}_i) := \sup_{\tau_i^1, \cdots \tau_i^{z_i} \sim \mathcal{D}_i} \mathbb{E}_{\xi \sim \mathrm{~Unif~}(\{\pm 1\}^{z_i})} \sup_{h_i \in \mathcal{H}_i, \pi_i \in \mathcal{V}_i} \frac{1}{z_i} \Bigg|\sum_{j=1}^{z_i} \xi_j y_i(\tau_i^j, h_i, \pi_i)\Bigg|, \nonumber
    \end{aligned}
\end{equation}
where $\xi \in \mathbb{R}^{z_i}$ is a random vector and $\xi_j$ denotes its $j^{th}$ element. $\mathcal{R}_{z_i}(\mathcal{Y}_i)$ characterizes the richness of function class $\mathcal{Y}_i$.

The left side of Equation~(\ref{eqn:generalization_bound}) is the probability that the generalization error $\epsilon_i$ is upper bounded for all the $N$ agents. Equation~(\ref{eqn:generalization_bound}) claims that the generalization error is bounded for all agents with high probability $1-\delta$. Similar to the discussions in Section~4 in \cite{boffi2020learning}, for specific function classes of $\mathcal{H}_i$ and $\mathcal{V}_i$, such as  Lipschitz parametric function or Reproducing kernel Hilbert space function classes, the Rademacher complexity of the function classes can be further bounded, leading to vanishing generalization errors as the number of samples $z_i$ increases. Such derivations are standard, and are thus omitted as they are not the focus of the present paper.

\paragraph{Remark.} The generalization guarantee in Proposition~\ref{prop:guarantees} requires that the testing and training trajectories are drawn from the same distribution. Since in testing the trajectories come from the closed-loop (controller-in-the-loop) system, we should ensure that the training trajectories are also from the closed-loop system. Thus, in our implementation, the training data are not uniformly sampled from the state-observation space. Instead, the samples are drawn online under the current control policy, which is a solution to (\ref{eqn:y_function_optimization}) using previously learned controller. We then use the updated controller to sample new  data to formulate (\ref{eqn:y_function_optimization}), and solve for an updated controller accordingly. In the experiments, we iterate this process until it converges. This way, at the steady stage of this process, the training and testing distribution shift becomes almost negligible, and the generalization results in Proposition \ref{prop:guarantees} can thus be used. We leave a systematic analysis on the generalization and convergence of this iterative process in our future work.

\section{Model Dynamics}
In the experiment section of our main paper, we use the 2D ground robots and 3D drones. For ground robots, we use a double integrator model with state $s_i = [x_i, y_i, v_{x, i}, v_{y, i}]$ for the navigation and predator-prey tasks, and the model from \cite{erick2014trajectory} for the nested rings task. For drones, we use the following dynamics:
\begin{equation}  \tag{16}
s_i = \left[
\begin{matrix}
x_i   \\
y_i   \\
z_i   \\
v_{x, i} \\
v_{y, i} \\
v_{z, i} \\
\theta_{x, i} \\
\theta_{y, i} \\
\end{matrix}
\right], \quad
\frac{ds_i}{dt} = \left[
\begin{matrix}
v_{x, i}   \\
v_{y, i}   \\
v_{z, i}   \\
g~tan(\theta_{x, i}) \\
g~tan(\theta_{y, i}) \\
a_{z, i} \\
\omega_{x, i} \\
\omega_{y, i}
\end{matrix}
\right], \quad
u_i = \left[
\begin{matrix}
\omega_{x, i} \\
\omega_{y, i} \\
a_{z, i} 
\end{matrix}
\right] .
\label{eqn:v_z_dot}
\end{equation}

\section{Supplementary Experiment}

For 3D drones we have shown the generalization capability to 1024 agents even when our method is trained with 8 agents (Figure~\ref{fig:generalize_3d}). For 2D ground robots we have similar results that were omitted in the main paper due to space limitations. We present the results in Figure~\ref{fig:generalize_2d} as below. Our method demonstrates the exceptional generalization capability to testing scenarios where the number of agents is significantly greater than that in training. The safety rate and average reward remain high even when the number of agents grow exponentially.

\begin{figure}[!htbp]
    \centering
    \includegraphics[width=0.9\linewidth]{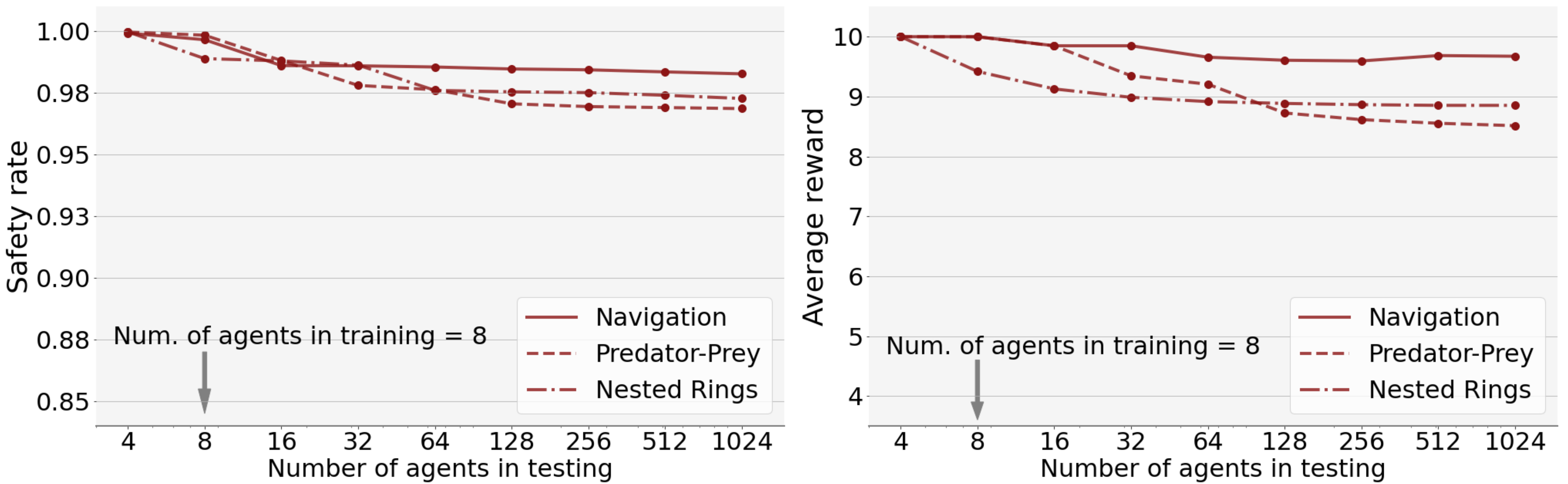}
    \caption{Generalization capability of our method in the 2D tasks. Our method is trained with 8 agents and tested with up to 1024 agents.}
    \label{fig:generalize_2d}
\end{figure}

\modify{
\paragraph{Ablation Study on Online Policy Refinement.} In Section~\ref{sec:policy_refinement}, we introduced a test-time policy refinement method. Here we study the effect of this method on our performance and present the results in Table~\ref{tab:opr_ablation}. It is shown that even without the OPR, the safety rate is still promising. The OPR further improved the safety rate. The steps requiring OPR in testing only accounts for a small proportion ($<17\%$) of the total steps. The proportion gradually becomes saturated and does not significantly increase as the number of agents grow.
\begin{table}[!htbp]
\centering 
\caption{\footnotesize Effect of online policy refinement (OPR). Proportion of OPR stands for the proportion of steps that OPR is performed in testing.}
\resizebox{\textwidth}{!}{
	\setlength{\tabcolsep}{1mm}{
	\begin{tabular}{c r c c c c c c c c}
	\toprule
	\multirow{2}{*}{Environment} & \multirow{2}{*}{Config} &  \multicolumn{4}{c}{Safety Rate} & \multicolumn{4}{c}{Proportion of OPR} \\ \cmidrule{3-10}
	&    &  4 Agents  & 8 Agents & 32 Agents & 1024 Agents    &  4 Agents  & 8 Agents & 32 Agents & 1024 Agents  \\ \midrule
        \multirow{2}{*}{\textsf{Maze}} & w/ OPR & 0.9999 & 0.9999 & 0.9987 & 0.9956 & 0.0149 & 0.0958 & 0.1423 & 0.1655\\
                                       & w/o OPR & 0.9999 & 0.9999 & 0.9869 & 0.9741 & 0 & 0 & 0 & 0 \\
        \multirow{2}{*}{\textsf{Tunnel}} & w/ OPR & 0.9999 & 0.9998 & 0.9988 & 0.9946 & 0.0117 & 0.0729 & 0.1271 & 0.1493\\
                                       & w/o OPR & 0.9999 & 0.9992 & 0.9866 & 0.9727 & 0 & 0 & 0 & 0 \\
				 \bottomrule
	\end{tabular}}}
	\label{tab:opr_ablation}
\end{table}
}

\begin{figure}[!htbp]
    \centering
    \includegraphics[width=0.5\textwidth]{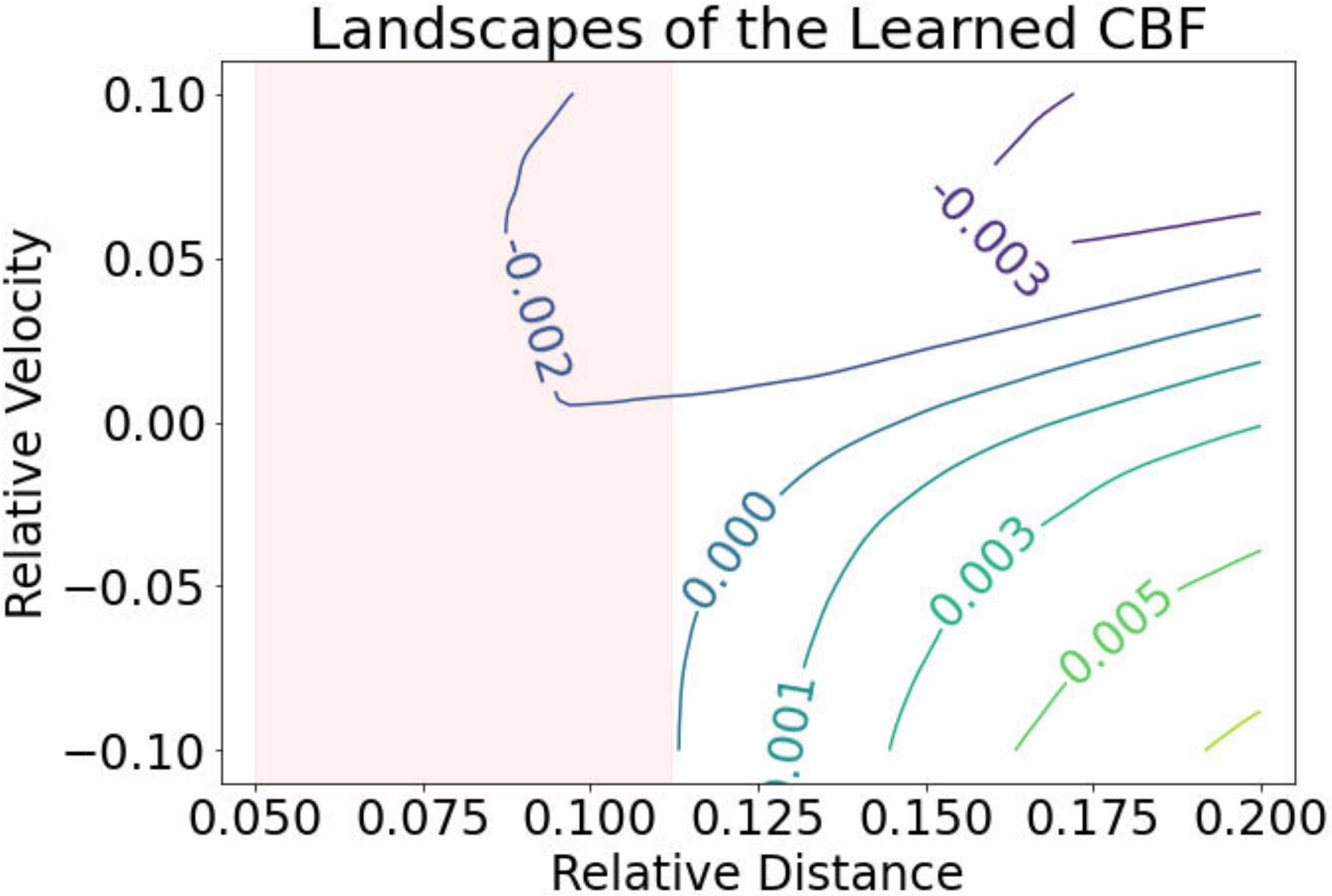}
    \caption{\footnotesize \modify{Visualization of the learned CBF in the \textsf{Maze} environment with 2 agents. The red area is where the distance between agents is less that the safe threshold.}}
    \label{fig:landscapes}
\end{figure}

\modify{
\paragraph{Visualization of the Learned CBF.} To have a better understanding of the learned decentralized CBF, we provide a visualization in Figure~\ref{fig:landscapes}. The CBF is learned in the \textsf{Maze} environment with two agents, in order to simplify the interpretation. The relative distance is defined in the 3D Euclidean space. The relative velocity is the negative time derivative of the relative distance. When the relative velocity is positive, the two agents are getting close to each other. From Figure~\ref{fig:landscapes}, we know that the learned CBF is negative on the dangerous states (the red area) and the potentially dangerous states (the top-right area), where the agents are moving towards each other. The CBF is positive only when the states are sufficiently safe (the bottom-right area).
}

\end{document}